\def\tla@while{-\hspace{-.18em}\triangleright}
\def\tlawhilep{\stackrel{\mbox{\raisebox{-.3em}[0pt][0pt]{$\scriptscriptstyle+\;\,$}}}{\tla@while}}
\def\strictarrow{\stackrel{\mbox{\raisebox{-.3em}[0pt][0pt]{$\scriptscriptstyle\mathrm{sr}\;\,$}}}{\tla@while}}
\newenvironment{pf2proof}{
\renewenvironment{proof}{
\begin{pfproof}
}{
\end{pfproof}
}
}{}
\newcommand{\cl}[2][]{\mathcal{C}_{#1}\big(#2\big)}
\renewcommand{\include}{\input}
\begin{document}
	\title{Symbolic construction of GR(1) contracts for synchronous systems with full information}
\author{Ioannis Filippidis \hspace{1cm} Richard M. Murray\\[0.4cm]
\texttt{\{ifilippi,murray\}{@}caltech.edu}\\[0.2cm]
Control and Dynamical Systems\\
California Institute of Technology}
\maketitle

\begin{abstract}

This work proposes a symbolic algorithm for the construction of assume-guarantee specifications that allow multiple agents to cooperate.
Each agent is assigned goals expressed in a fragment of linear temporal logic known as generalized reactivity of rank 1 (GR(1)).
These goals may be unrealizable, unless additional assumptions are made by each agent about the behavior of the other agents.
The proposed algorithm constructs weakly fair assumptions for each agent, to ensure that they can cooperate successfully.
A necessary requirement is that the given goals be cooperatively satisfiable.
We prove that there exist games for which the GR(1) fragment with liveness properties over states is not sufficient to ensure realizability from any state in the cooperatively winning set.
The obstruction is due to circular dependencies of liveness goals.
To prevent circularity, we introduce nested games as a formalism to express specifications with conditional assumptions.
The algorithm is symbolic, with fixpoint structure similar to the GR(1) synthesis algorithm, implying time complexity polynomial in the number of states, and linear in the number of recurrence goals.

\end{abstract}

\tableofcontents

	\section{Introduction}

The design and construction of a large system relies on the ability to divide the problem into smaller ones.
Each subproblem involves a subset of the system, and may itself be refined further into smaller problems.
The subsystems that result from the smaller problems are considered as modules of the larger system.
In many cases, the modules interact with each other, either physically, or as software, or both.
For this reason, the interaction between modules needs to be constrained, in order to ensure that the modules can perform their operation as intended.
For example, if we consider the fridge and a power plug as modules of a house, then the fridge can only preserve food provided that the plug provides electric power uninterruptedly.
In many cases, modularization is a necessity, imposed by the topology of the design, because the system comprises of elements distributed over space.
These elements control some local part of the system, but they need to communicate, in order to coordinate.

Among the benefits of modularization are the division of a complex problem into smaller ones that are computationally cheaper to solve, the localization of reasoning, which focuses the designer's attention and reduces the danger of errors, and the ability to assign the design of each subsystem to a different entity, for example a contractor specializing in that type of system.
In addition, a well-defined description of each individual component enables re-using the same design in a different context, where such a component is needed.
This leads to the possibility of interfacing off-the-shelf components, based on their interface description, thus reducing the need for case-by-case design and production.

In order to describe a module and its interaction with other modules, and their environment, it is necessary to represent them.
A representation can range from an informal textual description, to a mathematically defined notation, with fixed syntax and semantics.
The latter is desirable, because it is not ambiguous, and it enables automation of checking whether a candidate solution satisfies the requirements.
Such a formal representation is usually called a {\em specification}.

Proving that a system will behave as intended, insofar as this is captured by a specification, is a major objective in systems that are critical for the safety of humans, or have a very high cost.
These include aircraft, especially airliners, spacecraft, which is a major investment and missions are, in many cases, unique and not to be repeated, automotive subsystems, nuclear power plant controllers, and several other application areas.

We can distinguish two broad problems, at different phases of system design.
The first one asks for producing formal specifications that describe the modules, with detail sufficient to allow for automated synthesis.
The second problem asks for constructing an implementation of each module, and assembling the results into a complete composite system.
The first problem comprises the modularization and specification step, whereas the second is the construction phase.

The specification of a system can be implemented by humans, or constructed by an algorithm.
The latter approach is known as (automated) {\em synthesis}, and relies on notions from the theory of games \cite{Thomas08npgi}.
Synthesis has attracted considerable interest in the past two decades, and advances both in theory and implementation have been made, as described in the following sections.
In this work, we are interested in algorithmic synthesis, for both phases of system design.
In particular, we aim at automatically modularizing a design that has been partially specified by a human.
In other words, humans give as input a formal description about what each module is expected to accomplish.
Note that this step is necessary, in one form or another, because the algorithm cannot know what the modules are intended for.
We consider these as the primitive specifications, that are given by a human, and will be completed algorithmically.
These specifications may be insufficient for obtaining a coherent system, but describe the goals, and provide the starting point for an algorithmic approach to complete the specifications, and then construct implementations.
The automated modularization step involves {\em completing} the specifications, by adding more detail, in a way that ensures that there exist components satisfying the primitive specifications.
Regarding the synthesis phase, we are interested in efficient and scalable synthesis algorithms that can handle specifications with many goals.

Clearly, the formal description of all the details in a given implementation is itself a specification.
However, fixing a particular implementation is usually much more restrictive than needed.
It is desired to describe only what is necessary of a particular module, and leave the internal details of its exact operation to be decided by the implementor of that module.
The difference between an implementation, and a less restrictive specification is {\em quantification}.
A specification contains {\em existential} quantification, if it asks for some type of behavior contained in a given set, but does not describe a particular instance of that behavior.
Another term that is commonly used to characterize this quality of a specification is {\em declarative}.

This motivates regarding synthesis as a {\em compiler} activity.
In analogy with declarative programming languages like Haskell, a declarative specification is intended to leave unconstrained the exact imperative details of {\em how} the implementation will behave, step-by-step.
A synthesizer from declarative specifications {\em compiles} them into an implementation that operates in time by reading environment inputs, and writing outputs.
Reading and writing are used here in a broad sense, meaning interaction that may involve mechanical or hydraulic forces.
A distinction between conventional declarative languages, and module synthesis is that the latter produces components that continue to interact with their environment without ever terminating, also known as {\em reactive} systems.
For example, the computer that controls a nuclear reactor is not intended to terminate and produce some result, under normal operation.
This is in contrast to a matrix multiplication program.

\section{Modular design by contract}
\label{sec:design-by-contract}

There have been several approaches to the modularization of systems.
The design of each module becomes simpler, because it involves fewer elements, as counted, for example, by the number of variables used to represent it.
However, the challenge is shifted, from designing a monolithic system, to putting together the pieces.
In this report, we consider the problem of interfacing the modules.

Our approach constructs specifications that are partitioned into assumptions about the behavior of the world outside a component, and requirements that the component guarantees, provided its assumptions hold.
This is known as assumption-commitment, or rely-guarantee paradigm for describing behaviors.

The assumption-commitment paradigm about reactive systems is an evolved instance of reasoning about conditions before, and after, a terminating behavior.
A formalism for reasoning using triples of a {\em precondition}, a program, and a {\em postcondition} was introduced by Hoare \cite{Hoare69cacm}, following the work of Floyd \cite{Floyd67sam} on proving properties of elements in a flowchart, based on ideas by Perlis and Gorn \cite{Schneider97}.

Hoare's logic applies to terminating programs.
However, many systems are not intended to terminate, but instead continue to operate, by reacting to their environment \cite{Pnueli89popl}.
Francez and Pnueli \cite{Francez78ai} introduced a first generalization of Hoare-style reasoning to cyclic programs.
They also considered concurrent programs.
Their formalism uses explicit mention of time, and is structured into pairs of assumptions and commitments.

Lamport \cite{Lampor83toplas} observed that such a style of specification is essential to reason about complex systems in a modular way.
Lamport and Schneider \cite{Lampor80ai,Lamport84toplas} introduced, and related to previous approaches, what they called {\em generalized Hoare logic}.
This is a formalism for reasoning with pre- and post-conditions, in order to prove program invariants.
Misra and Chandy introduced the rely-guarantee approach for safety properties of distributed systems \cite{Misra81tse}, still for safety properties.
All properties up to this point were safety, and not expressed in temporal logic \cite{Pnueli77}.
Two developments followed, and the work presented here is based on them.

The first was Lamport's introduction of {\em proof lattices} \cite{Lamport77tse}.
A proof lattice is a finite rooted directed acyclic graph, labeled with assertions.
If $u$ is a node labeled with property $U$, and $v, w$ are its successors, labeled with properties $V, W$, then if $U$ holds at any time, eventually either $V$ or $W$ will hold.
In temporal logic, this can be expressed as $\always (U \rightarrow \eventually (V \vee W))$.
Owicki and Lamport \cite{Owicki82toplas} revised the proof lattice approach, by labeling nodes with {\em temporal} properties, instead of atemporal ones (“immediate assertions”).

The second development was the expression by Pnueli \cite{Pnueli85lmcs} of assume-guarantee pairs in temporal logic, i.e., without reference to an explicit {\em time} variable.
In addition, Pnueli proposed a proof method for {\em liveness} properties, which is based on well-founded induction.
This method can be understood as starting with some temporal premises for each component, and iteratively tightening these properties into consequents that are added to the collection of available premises, for the purpose of deriving further consequents.
This method enables proving liveness properties of modular systems.
Informally, the requirement of well-foundedness allows using as premises only properties from an earlier stage of the deductive process.
This prevents circular existential reasoning about the future, i.e., circular dependencies of liveness properties.
As a simple example \cite{Abadi94podc}, consider Alice and Bob.
Alice promises that, if she sees $b$, then she will do $a$ at {\em some} time in the future.
Reciprocally, Bob promises to eventually do $b$, after he sees $a$.
As linear temporal logic (LTL) formulae, these read $\always(b \rightarrow \nxt \eventually a)$ for Alice, and with $a, b$ swapped, for Bob.
If both Alice and Bob default to not doing any of $a$ or $b$, then they both satisfy their specifications.
This problem arises, because existential quantification in {\em future}\footnote{
	Compare with existential quantification in {\em past} time, as is the case in the past fragment of LTL.
	This causes no problems, because it concerns things past.
}
time allows simultaneous antecedent failure.
Otherwise, if Bob was {\em required} to do $b$ for the first time, then Alice would have to do $a$, then Bob do $b$ again, etc.

Compositional approaches to verification have treated the issue of circularity by using the description of the model under verification as a vehicle for carrying out the proof.
In other words, the immediate behavior of the model, as captured by its transition relation, should constrain the system sufficiently much, so as to enable deducing the satisfaction of its liveness guarantees, as in the work of Abadi and Lamport \cite{Abadi95toplas}.
This approach is suitable for verification, because the model is available at that stage.
However, in the automated construction of specifications for synthesis, we prefer to quantify over time, instead of describing immediate behavior.
Therefore, we desire to be able to reason about dependencies of liveness properties between modules, with minimal reliance on the implementation, i.e., on safety properties.
Stark \cite{Stark86} proposed a proof rule for assume-guarantee reasoning about a non-circular set of liveness properties.
McMillan \cite{McMillan99charme} introduced a proof rule for circular reasoning about liveness.
However, this proof system is intended for verification, so it relies on the availability of a model.
It requires the definition of a proof lattice, and introduces graph edges that {\em consume} time, as a means to break simultaneity cycles.
The method we propose in this work constructs specifications that can have dependencies of liveness goals, but in a way that avoids circularity.
It is discussed in \cref{sec:proposed-approach}.

The assumption-guarantee paradigm has since evolved, and renamed several times.
Meyer \cite{Meyer92c} called the paradigm {\em design by contract}, and supported its use for abstracting software libraries, and validating the correct operation of software.
The notion of a contract generalizes assume-guarantee reasoning, because a contract can have several forms.
For example, it may come in the form of an {\em interface automaton} \cite{Alfaro01}, which offers only an implicit description of assumptions, as those environments that can be successfully connected to the interface.
The interface automaton abstracts the internal details of a module, and serves as its “surface appearance” towards other modules.

More recently, contracts have been proposed for specifying the design of systems with both physical and computational aspects \cite{Benveniste12}.
In this context, contracts are used broadly, as an umbrella term that encompasses both interface theories and assume-guarantee contracts \cite{Nuzzo14memocode,Benveniste12}, with extensions to timed and probabilistic specifications.
A proof system for verifying that a set of contracts refines a contract for the composite system has been proposed in \cite{Cimatti12euromicro}.
A verification tool of contract refinement using an SMT solver is described in \cite{Cimatti13ase}.
This body of work focuses mainly on using, or manipulating, existing contracts.
We are interested in {\em constructing} contracts.

\section{Games}
\label{sec:literature-review}

In this section, we review relevant results from the literature on games of infinite duration.
The literature is extensive, so we restrict to a sample that we consider representative.
The problem of constructing a module that exhibits a desired set of behaviors in time can be solved with algorithms that solve games.
There are different types of games, depending on:
\begin{itemize}
	\item how many transducers are being constructed inside a single system,
	\item the order of player choice,
	\item the winning condition,
	\item the visibility of variables, and
	\item the number of players.
\end{itemize}
Games can be turn-based, where a single player moves in each time step, or concurrent \cite{Alur02jacm,Alfaro00}.
In synchronous games, turns are taken with a fixed schedule, whereas asynchronous games are scheduled dynamically by a dedicated player called scheduler \cite{Finkbeiner07lopstr}.

If we want to construct a single transducer, then the synthesis problem is {\em centralized}.
Synchronous centralized synthesis from LTL has time complexity doubly exponential in the length of the formula \cite{Pnueli89popl}, and polynomial in the number of states.
By restricting to a less expressive fragment of LTL, the complexity can be lowered to polynomial in the formula \cite{Piterman06}.
Asynchronous centralized synthesis does not yield to such a reduction \cite{Pnueli09memocode}.
Partial information games pose a challenge similar to full LTL properties, due to the need for a powerset-like construction \cite{Kupferman00atl}.
To avoid this route, alternative methods have been developed \cite{Kupferman05focs}, that use universal co-B\"{u}chi automata, instead of determinization, and antichains \cite{Wulf06hscc}.

If we want to construct several communicating transducers to obtain some collective behavior, then synthesis is called {\em distributed}.
Of major importance in distributed synthesis is who talks to whom, and how much, called the communication architecture.
A distributed game with full information is in essence a centralized synthesis problem.
Distributed {\em synchronous} games with partial information are undecidable \cite{Pnueli90focs}, unless we restrict the communication architecture to avoid information forks \cite{Finkbeiner05lics}, or restrict the specifications to limited fragments of LTL \cite{Chatterjee13fmcad}.
{\em Bounded synthesis} circumnavigates this intractability by searching for systems with a priori bounded memory \cite{Finkbeiner13sttt}.
Asynchronous distributed synthesis is undecidable \cite{Finkbeiner07lopstr}.

Besides distributed co-synthesis of fixed transducers, the more general notion of {\em assume-guarantee} synthesis \cite{Chatterjee07tacas} constructs transducers that can interface with a complete set of other transducers, as described by an assumption property.
This is the same viewpoint with the approach proposed here.
A difference is that we are interested to synthesize temporal properties with quantification (liveness), instead of directly transducers.
Besides, note that “distributed” in the literature means constructing multiple transducers.
In contrast, we are interested in “distributed” also in the sense that the modules will be synthesized {\em separately}.
Thus, in the problem we consider, distributed synthesis with full information does not reduce to centralized synthesis.

Another body of work relevant to our effort is the construction of assumptions that make an unrealizable problem realizable.
The methods originally developed for this purpose have been targeted at compositional verification, and use the $L^\star$ algorithm for learning deterministic automata \cite{Cobleigh03tacas}, and implemented also symbolically \cite{Nam06atva}.
Later work addressed synthesis, with the theory for a solution proposed in \cite{Chatterjee08concur}, on which our work builds.
This approach separates the construction of assumptions into safety and liveness.
The safety assumption is obtained by property {\em closure}, which also plays a key role in the composition theorem presented in \cite{Abadi95toplas}.

Methods that use opponent strategies to refine the assumptions of a generalized Streett(1) specification, searching over syntactic patterns were proposed in \cite{Li11memocode,Alur13fmcad}.
The syntactic approach of \cite{Alur13fmcad} was used in \cite{Alur15tacas} to refine assume-guarantee specifications of coupled modules.
However, that work cannot handle circularly connected modules, thus neither circular liveness dependencies.
Other approaches aim at identifying the root causes of unrealizability in demanding guarantees \cite{Konighofer13sttt}.
A comprehensive survey can be found in \cite{Bloem14synt}.

\section{Proposed approach}
\label{sec:proposed-approach}

This report proposes a method for constructing assume-guarantee specifications for a set of modules.
The resulting specifications must be {\em realizable} \cite{Abadi1989realizable}, i.e., for each module, there should exist a transducer that implements its specification.
The required behavior of each module is described by a contract over a set of variables that can change values in time.
We choose linear temporal logic (LTL) \cite{Pnueli77} to describe contract specifications.
The specification of a module includes a partition of variables into inputs (uncontrolled by the module), and outputs (controlled by the module), as well as the primitive goals that the module must achieve, but no assumptions yet.
These goals form an overall objective that the resulting contracts should satisfy.
At this stage, the goals may be insufficient to ensure cooperation of the modules with each other.
In other words, the specifier defines guarantees for each module, and the proposed method introduces assumptions that ensure realizability.
Note that each property introduced as an assumption in the contract of some module, will also become a guarantee in the contract of some another module.

We assume that, if we were to construct a single transducer that controls the variables of all modules, then such a transducer exists.
This requires that the conjoined goals be satisfiable.
If the goals are unsatisfiable, then the algorithm diagnoses so, but cannot resolve the conflicts.
Such a resolution would be arbitrary, because it alters the design intent that a human defined, so it should be performed by a human.

As noted in \cref{sec:literature-review}, synthesis from LTL specifications is intractable.
For this reason, we restrict our effort to an LTL fragment that is less expressive, but still practically useful, while allowing synthesis in time polynomial in the number of states, and in the size of the specification formulae.
The selected fragment is known as generalized reactivity of rank 1, GR(1) \cite{Piterman06}, and describes generalized Streett games with one pair, comprised of a persistence and an acceptance property.
This restriction aims at making efficient the synthesis phase, after the contracts have been constructed, as well as the construction of the contracts themselves.
It is a trade-off between expressive power and complexity.
It corresponds to considering the bottom level in the Borel hierarchy of sets of behaviors, as sequences \cite{Manna90podc}.

We model a composite system as a game with multiple players, each representing a module.
In \cref{sec:property-closure}, the winning set is computed for the case of a centralized transducer, also known as the {\em cooperative} winning set.
This is used as a safety assumption for all modules, in order to prevent any module from forcing the system to exit the set from where another module has a winning strategy.

For each module, and each recurrence goal, the winning set in the game with that goal is computed in \cref{sec:nested-games}.
If the winning region is smaller than the cooperative winning set, then weak liveness assumptions are computed for the other players, until reaching a new fixpoint.
These assumptions must be {\em unconditionally realizable}, to prevent trivial realizability of a particular game.

The predicates in the resulting contract are represented symbolically, as binary decision diagrams (BDDs).
This is in contrast with syntactic approaches for constructing assumptions.
Syntactic approaches are restricted to the subset of specifications producible by the chosen grammar template, thus are incomplete.
In contrast, our semantic approach always obtains a solution, if one exists.
The trade-off is that the resulting properties do not have a syntactic form digestible by humans.
The semantic contracts that we construct correspond to a view of contracts as an intermediate result, to be consumed by synthesis algorithms that will construct each individual module, potentially after a refinement of the contract by addition of local, internal, requirements.

As discussed in \cref{sec:design-by-contract}, a challenge in modular specification is reasoning about liveness.
An assume-guarantee contract is intended to remain as declarative as possible.
However, there are behaviors that, if specified declaratively, lead to cyclic dependency of liveness properties.
For this reason, we structure the constructed specifications in a way that avoids circular dependencies of liveness requirements.
This requires imposing a sequencing order on the liveness properties involved.
In verification, the implementation itself is used as reference for enforcing this sequencing.
In temporal logic, it is possible to achieve this purpose by explicitly introducing auxiliary variables.
We avoid introducing such additional variables, because they increase the state space and can be regarded as a limited form of synthesis.
Instead, we alter the specification structure, from flat to nested.
For each liveness goal, nesting is introduced in the form of a stack of games.
Each game in the stack has a reachability objective, and separate assumptions.
Winning one of these games leads higher in the stack, until the top is reached.
The top game can be won directly, and leads to the recurrence goal.
The reliance on safety is in that each subgame is defined on a subset of the states.
In this way, the composite system is prevented from regressing backwards, to a previous game, and progress towards the recurrence goal is ensured.

	\section{Preliminaries}

\subsection{Turn-based synchronous games}
\label{sec:turn-based-sync-games}

We consider turn-based synchronous games with two players \cite{Alur02jacm,Alfaro00}.
The results can be extended to multiple players.
We do not consider concurrent games, because they are not determined, and a strategy can require an infinite amount of memory \cite{Alfaro01}.

The situation in a game is represented by a number of variables.
An assignment to these variables is called the {\em state} of the game.
The game evolves by a sequence of state changes.
If, in each state change, only a single player changes its own state, then the game is called {\em turn-based} \cite{Alur02jacm}.
It is synchronous if the players take turns in a fixed order.

In a game with two players, we will refer to the two players by the indices 0 and 1.
In some cases, we will also use the notation of indexing the players with the letters $e$ (environment) and $s$ (system), instead of numbers.
This is more readable when we discuss operations that consider one player as the system of interest, and lumps the remaining players as the environment of that player.

The state comprises of variables in the set $\mathcal{V}$.
Each player can read all the variables, i.e., it has full information.
Each player can write only those variables that she owns, with the exception of variable $i$.
Player $j$ owns the set $\mathcal{X}_j$ of variables.
In addition, each player increments the auxiliary variable $i$, used to track turns.
So, $\mathcal{V} = \{i\} \cup \bigcup_{i = 0}^{n - 1} \mathcal{X}_i$.

By $x_i$ we will denote both the tuple of symbols in $\mathcal{X}_i$, as well as a tuple of values assigned to those symbols.
In its own turns, player $i$ chooses a next assignment $x_i' \in \mathcal{X}_i'$.
The set of all states $2^{\mathcal{V} }$ is denoted by $S$.
For a set of variable symbols $X$, define the set of assignments $\dbr{\top}_X \triangleq 2^X$.
A predicate $f$ indicates a set
\myeq{
	\dbr{f}_X
	\triangleq
	\{
		u \in 2^X
			\vert\;
		f(u)
	\}.
}

The game can be represented by a {\em game graph}, with nodes partitioned between the two players.
At each node, only one of the two players moves.
The game graph is bipartite, because the game is turn-based.
Note that bipartiteness is necessary\footnote{
	Any game graph can be converted to a bipartite one, by introducing intermediate nodes.}
later, for switching between players when constructing a nested game.
The player that moves first can be selected later, after computing the winning sets, when constructing the transducers.

Each node in the game graph is represented by a tuple $(x, i)$, where:
\begin{itemize}
	\item $x_i \in 2^{\mathcal{X}_i}$ is an assignment for the variables owned by player $i$, and the aggregate state $x = (x_0, x_1, \dots, x_{n - 1})$.
	\item $i \in I \triangleq \naturals_{< n}$ is an index that signifies the player that takes a turn from $(x, i)$.
\end{itemize}
The transition relation of player $i$ is $\hat{\rho}_i(x, x_i')$, where $\hat{\rho}_i$ is an action formula (a Boolean formula over primed and unprimed variables) \cite{Lamport94toplas}.
Player $i$ moves from the node $(x, i)$, by assigning values to variables in $x_i$.
Let $\bar{x}_j$ denote (either a tuple of, or an assignment to) variables in $\bigcup_{ i = 0, i \neq j }^{ n - 1 } \mathcal{X}_i$
We will try to minimize use of the term “state”, because it can be confusing.

\begin{remark}
A (synchronous) interleaving representation \cite{Abadi95toplas} is used here for the game, because it is symmetric and emphasizes the turn-based semantics.
As observed in \cite{Abadi95toplas}, an interleaving representation can be easier to reason about.
In the literature about GR(1) games, typically a non-interleaving representation is used.
In a non-interleaving representation, the combination of primed and unprimed variables captures whose turn it is to play (the role served by the integer variable $i$).
In that representation, player 0 moves from a valutation of $(x_0, x_1)$, and player 1 moves from $(x_0', x_1)$.
Note that the scheduling variable $i$ is shared-write by all players.
\end{remark}

\subsection{Integrals}

In this section, we consider preimage functions induced by the transition relations $\rho_i$.
These functions result from different quantification of the variables.
Depending on the source and target set, several variants can be defined.
We will refer to predicates and the sets they represent interchangeably.

\begin{definition}[Predecessors]
Given a predicate $F$ over $\mathcal{V}$, the {\em existential predecessors} of $F$ are those nodes, from where the set $\dbr{F}_\mathcal{V}$ can be reached with one transition in the game graph,
\myeq{
\Pre_j(F)
\triangleq
\lambda x.\;
\lambda i.\;
(i = j) \wedge
\exists x_j'.\;
\hat{\rho}_j(x, x_j') \wedge
F\vert_{x_j' / x_j}(\bar{x}_j, x_j', j \oplus_n 1),
}
where $j \oplus_n k \triangleq (j + k) \mod{n}$.
Denote $\Pre(F) \triangleq \bigvee_{j \in I} \Pre_j(F)$ the predecessors resulting from moves by all players.
\end{definition}

The semantics of the least fixpoint operator $\mu X.\; f(X)$ is defined as
\myeq{
	\dbr{X_k}_M^\mathcal{E}
	&\triangleq
	\begin{cases}
		\emptyset,
		\quad k = 0
			\\
		\dbr{ f(X) }_M^{ \mathcal{E}[X \leftarrow \dbr{X_{k - 1} }_M] },
		\quad k > 0
	\end{cases}
		\quad
	\dbr{ \mu X.\; f(X) }_M^\mathcal{E}
	&\triangleq
	\bigcup_{k = 0}^\infty X_k,
}
where $M$ a set of variables, and $\mathcal{E}: \{X, \dots\} \rightarrow \dbr{\top}_M$ is an assignment that keeps track of the fixpoint iteration.
The notation $\mathcal{E}[X \leftarrow \dbr{h}_M]$ denotes the modification of $\mathcal{E}$ to assign the set $\dbr{h}_M$ to variable $X$.

\begin{definition}[Iterated predecessors]
The iterated predecessor relation yields the nodes that can reach the set $\dbr{F}$ under {\em some} behavior of the players, or are already in the set $\dbr{F}$, i.e.,
\myeq{
	\Pre^\ast(F)
	\triangleq
	\mu X.\;
	F \vee \Pre(X).
}
\end{definition}
Note that the set $\dbr{\Pre^\ast(F) }$ contains the nodes from where the players can {\em cooperate} to reach the set $\dbr{F}$.
Where clear from the context, we will call both $\Pre$ and $\Pre^\ast$ “predecessor” sets.

\begin{definition}[Controllable predecessors]
The {\em controllable predecessors} of $F$ for player $j$ are those nodes from where player $j$ can force a visit to the set $\dbr{F}_\mathcal{V}$ in the next logic time step, irrespective of how the other players move, i.e.,
\myeq{
	\CPre_j(F)
	\triangleq
	\lambda x.\;
	\lambda i.\;
	\neg^{i \neq j} \exists x_i'.\;
	\hat{\rho}_i(x, x_i') \wedge \neg^{i \neq j} F\vert_{ x_i' / x_i }(\bar{x}_i, x_i', i \oplus_n 1).
}
\end{definition}
For example, for player $j = 0$, it is
\myeq{
	\CPre_0(F)
	= \lambda x.\; \lambda i.\;
		&\big(
			(i = 0) \wedge \neg^{0 \neq 0} \exists x_0'.\;
			\hat{\rho}_0(x, x_0') \wedge \neg^{0 \neq 0} F\vert_{ x_0' / x_0 }(\bar{x}_0, x_0', 0 \oplus_2 1)
		\big)
			\vee
		\\
		&\big(
			(i = 1) \wedge \neg^{1 \neq 0} \exists x_1'.\;
			\hat{\rho}_1(x, x_1') \wedge \neg^{1 \neq 0} F\vert_{ x_1' / x_1 }(\bar{x}_1, x_1', 1 \oplus_2 1)
		\big)
	\\
	= \lambda x.\; \lambda i.\;
		&\big(
			(i = 0) \wedge \neg^0 \exists x_0'.\;
			\hat{\rho}_0(x, x_0') \wedge \neg^0 F\vert_{ x_0' / x_0 }(\bar{x}_0, x_0', 1)
		\big)
			\vee
		\\
		&\big(
			(i = 1) \wedge \neg^1 \exists x_1'.\;
			\hat{\rho}_1(x, x_1') \wedge \neg^1 F\vert_{ x_1' / x_1 }(\bar{x}_1, x_1', 0)
		\big)
	\\
	= \lambda x.\; \lambda i.\;
		&\big(
			(i = 0) \wedge \exists x_0'.\;
			\hat{\rho}_0(x, x_0') \wedge F\vert_{ x_0' / x_0 }(\bar{x}_0, x_0', 1)
		\big)
			\vee
		\\
		&\big(
			(i = 1) \wedge \forall x_1'.\;
			\hat{\rho}_1(x, x_1') \rightarrow F\vert_{ x_1' / x_1 }(\bar{x}_1, x_1', 0)
		\big).
}
As defined here, the operator $\CPre$ is the predicate version of that defined in \cite{Thomas08npgi}.
An attractor contains nodes from where player $j$ can force its way to the set $\dbr{F}$.

\begin{definition}[Attractor]
The {\em attractor} $\Attr_j(F)$ for player $j$ is the set of all nodes, from where the system can force a future visit to the set $\dbr{F}$, or is already in $\dbr{F}$,
\myeq{
	\Attr_j(F)
	\triangleq
	\mu X.\;
	F \vee \CPre_j(X).
}
As alternative notation, let $\CPre_j^\ast(F) \triangleq \Attr_j(F)$.
\end{definition}

\subsection{Linear temporal logic}

Linear temporal logic \cite{Pnueli77} with past \cite{Lichtenstein85clp} is an extension of Boolean logic used to reason about temporal modalities over sequences.
The temporal operators:
\begin{itemize}
	\item “next” $\nxt$,
	\item “previous” $\previous$,
	\item “until” $\until$, and
	\item “since” $\since$
\end{itemize}
suffice to define the other operators \cite{Pnueli77,Baier08}.
Let $AP$ be a set of propositional variable symbols, with values in $\booleans\triangleq\{\bot, \top\}$.
A well-formed LTL formula is inductively defined by
\myeq{
	\varphi
	::=
	p
	&\;|\; \neg \varphi
	\;|\; p \wedge p
	\\&
	\;|\; \nxt \varphi
	\;|\; \varphi \until \varphi
	\\&
	\;|\; \previous \varphi
	\;|\; \varphi \since \varphi.
}
It is modeled by a sequence (word) of variable assignments $w: \naturals \rightarrow \booleans^{AP}$.
Here, we define informally the operators that we will use.
The formula $\always p$ holds if $p$ is forever true, $\eventually p$ if $p$ becomes true in some non-past time.
The {\em weak} “previous” formula $\weakprevious p \triangleq \neg \previous \neg p$ is true if a previous time step does not exist, or $p$ is true in the previous time step.
In contrast, $\previous p$ is true if a previous time step does exist, and $p$ is true then.

\subsection{Interleaving representation of a Streett(1) game}
\label{sec:games-in-logic}

We will use an in interleaving representation \cite{Abadi95toplas}, with the notation defined in \cref{sec:turn-based-sync-games}.
In an interleaving representation of a turn-based game, a single player moves in each logic time step.
In a synchronous game, players move in a fixed order.
This order will be enforced by using the auxiliary variable $i$, as index of the player that should move in the current logic time step.

In a game, each player is assigned a property to realize.
A game structure collects the initial conditions, actions, and liveness goals of each player.
Two-player game structures in a non-interleaving representation are defined in \cite{Bloem12jcss}.
The property to be realized by the player of interest is defined there accordingly.

In an interleaving representation, a generalized reactivity(1) property \cite{Piterman06} to be realized by player $j$ can be described as follows.
Define
\myeq{
	\rho_j(x, x_j', i)
	&\triangleq
	\ite\big(
		i \neq j,\;
		x_j' = x_j,\;
		\hat{\rho_i} \wedge (i' = i \oplus_n 1)
	\big)
		\\
	\bar{\rho}_j(x, \bar{x}_j', i)
	&\triangleq
	\bigwedge_{ k \in J \setminus \{j\} }
		\rho_k(x, x_k', i)
}
In a two-player game, it is
\myeq{
	\bar{\rho}_j(x, \bar{x}_j', i)
	=
	\bigwedge_{ k \in \{0, 1\} \setminus \{j\} }
		\rho_k(x, \bar{x}_k', i)
	=
	\rho_{ 1 - j }(x, \bar{x}_{1 - j}', i).
}

\begin{definition}[Generalized reactivity(1)]
\deflab{def:gr1-interleaving-repr}
Assume that, for $i \in I$, each $\rho_i(x, x_i')$ is an action formula, as defined in \cref{sec:turn-based-sync-games}.
Let $j \in I$ be the index of a player.
Assume that, for $k \in I_P \subset \naturals$, each $P_{j, k}(x, i)$ is an assertion (a Boolean formula over unprimed variables), and similarly for $R_{j, r}(x, i)$.
Then, the LTL formula
\myeq{
	\varphi_{G, j}
	\triangleq
	\begin{matrix*}[l]
	&\wedge\;
	\always
		\big(
			(\weakprevious \historically \bar{\rho}_j)
				\rightarrow
			\rho_j
		\big)
		\\
	&\wedge\;
		\begin{matrix*}[l]
			\big(
				\always \bar{\rho}_j \wedge
				\bigwedge_k
					\always \eventually \neg P_{j, k}
			\big)
			\rightarrow
				\bigwedge_r
				\always \eventually R_{j, r}.
		\end{matrix*}
	\end{matrix*}
}
describes a GR(1) property for player $j$.
\end{definition}

For symmetry, the initial conditions have been omitted above.
Initial conditions require selecting the player that moves first, and their consideration can be delayed until the phase of constructing a winning strategy.
Observe that the action $\rho_i$ can depend on the variables $x, x_i'$, but is independent of the variables $\bar{x}_i'$.

As a shorthand for the above, we define strict implication between two temporal logic formulae in a (synchronous) interleaving representation of a game.
\begin{definition}[Strict implication]
\deflab{def:strict-implication}
Let $\rho_e, \rho_s, P_k, R_r$ be actions (or assertions).
Define the {\em strict implication} operator $\strictarrow$ as
\myeq{
	\big(
	\underbrace{
		\always \rho_e
			\wedge
		\bigwedge_k
			\always \eventually \neg P_k
	}_{assumption}
	\big)
		\strictarrow
	\big(
	\underbrace{
		\always \rho_s
			\wedge
		\bigwedge_r
			\always \eventually R_r
	}_{guarantee}
	\big)
	\triangleq
	\begin{matrix*}[l]
		&\wedge\; \always
			\big(
				(\weakprevious \historically \rho_e)
					\rightarrow
				\rho_s
			\big)
				\\
		&\wedge\;
				\big(
					\always \rho_e \wedge
					\bigwedge_k
						\always \eventually \neg P_k
				\big)
					\rightarrow
				\bigwedge_r
					\always \eventually R_r.
	\end{matrix*}
}
\end{definition}
The antecedent constrains the other players, and the consequent the player under consideration.
For a non-interleaving representation, Strict implication was defined in \cite{Bloem12jcss}.
Unless the action-fairness pairs are machine closed, and the actions are complete, the strict implication operator $\strictarrow$ differs from the \textsc{TLA} while-plus operator $\tlawhilep$ \cite{Abadi95toplas,Lamport02}.

With \cref{def:strict-implication}, we can rewrite \cref{def:gr1-interleaving-repr} using strict implication
\myeq{
	\varphi_{G, j}
	=
	\begin{matrix*}[l]
		&\wedge\;
			\always \bar{\rho}_j
			\\
		&\wedge\;
			\bigwedge_k
				\always \eventually \neg P_{j, k}
	\end{matrix*}\;
		\strictarrow
	\begin{matrix*}[l]
		&\wedge\;
			\always \rho_j
			\\
		&\wedge\;
			\bigwedge_r
				\always \eventually R_{j, r}.
	\end{matrix*}
}

\section{Property closure}
\label{sec:property-closure}

\subsection{Cooperative winning set}
\label{sec:cooperative-winning-set}

In the following, we will present an algorithm for computing pairs of specifications that allow players to cooperate.
For that purpose, some definitions are needed.
Let $\Sigma$ be a suitable alphabet, for example, $\Sigma = 2^\mathcal{V}$.
The set $\Sigma^\star$ denotes finite sequences of elements in $\Sigma$.
The set $\Sigma^\omega$ denotes infinite sequences of elements in $\Sigma$.
The elements of a sequence are indexed by integers, starting at 0.
For a sequence $w \in \Sigma^\omega$, the subsequence that starts at element $i$ and ends at element $j$ (inclusive) is denoted by $w[i \dots j]$.

\begin{definition}[\cite{Abadi91tcs,Abadi95toplas}]
A {\em behavior} or {\em property} $P \subseteq \Sigma^\omega$ is a set of infinite sequences.
\end{definition}

\begin{definition}[Prefix set \cite{Chatterjee08concur}]
The {\em prefix set} of a property $P$ is defined as
\myeq{
	\mathrm{Pref}(P)
	\triangleq
	\{
		\sigma \in \Sigma^\ast
			\vert\;
		\exists w \in P.\;
		\sigma = w[0 \dots \abs{\sigma} - 1]
	\}.
}
\end{definition}

\begin{definition}[Limit set \cite{Chatterjee08concur}]
Given a property $P \subseteq \Sigma^\star$, the set of limits of property $P$ in property $Q$ is defined as
\myeq{
	\mathrm{Safety}_Q(P)
	\triangleq
	\left\{
		w \in Q
			\vert\;
		\forall k \in \naturals.\;
		w[0 \dots k] \in P
	\right\}.
}
If the subscript $Q$ is omitted, then $Q = \Sigma^\omega$, i.e.,
\myeq{
	\mathrm{Safety}(P)
	\triangleq
	\mathrm{Safety}_{\Sigma^\omega}(P).
}
\end{definition}

\begin{definition}[Relative closure \cite{Naylor82,Abadi95toplas}]
\deflab{def:relative-closure}
The {\em closure} of a property $P \subseteq \Sigma^\omega$ with respect to another property $Q \subseteq \Sigma^\omega$ is defined as
\myeq{
	\mathcal{C}_Q(P)
	\triangleq
	\mathrm{Safety}_Q(\mathrm{Pref}(P) )
	=
	\left\{
		w \in Q
			\vert\;
		\forall k \in \naturals.\;
		\exists \sigma \in P.\;
		w[0 \dots k] = \sigma[0 \dots k]
	\right\}
}
If the subscript $Q$ is omitted, then $Q = \Sigma^\omega$, i.e.,
\myeq{
	\mathcal{C}(P)
	\triangleq \mathcal{C}_{\Sigma^\omega}(P).
}
For brevity, define $\overline{P} \triangleq \mathcal{C}(P)$.
\end{definition}
The definition $\mathcal{C}_{\Sigma^\omega}(P)$ corresponds to $\mathcal{C}(P)$ in \cite{Abadi95toplas}.
The closure of a property is with respect to the topology induced by the metric that measures similarity by the length of the longest common prefix between two sequences.

\begin{definition}
Assume that $P \subseteq \Sigma^\omega \cup \Sigma^\star$ is a property.
Define the set of letters that appear in any word in property $P$ as
\myeq{
	\mathrm{States}(P)
	\triangleq
	\left\{
		s \in \Sigma
			\vert\;
		\exists w \in P.\;
		\exists k \in \naturals.\;
		s = w[k]
	\right\}.
}
\end{definition}
The definition of closure implies that $\mathrm{States}(P) = \mathrm{States}(\cl{P})$.

\begin{definition}[\cite{Thomas95stacs,Chatterjee08concur}]
Assume that $F \subseteq \Sigma$ is a set of letters, and $\hat{\rho}_j, j \in I$ a collection of actions (transition relations).
Then, the {\em safe words} are those in the set
\myeq{
	\mathrm{Safe}(F)
	\triangleq
	\Big\{
		w \in \Sigma^\omega
			\vert\;
		\forall k \in \naturals.\;
		w[k] \in F \wedge
			\bigwedge_{j \in I}
			\Bigg(
				(w[k]\vert_i = j)
					\rightarrow
				\begin{array}{l}
					\wedge\; w[k + 1]\vert_i = j \oplus_n 1
						\\
					\wedge\; w[k]\vert_{\bar{x}_j} = w[k + 1]\vert_{\bar{x}_j}
						\\
					\wedge\; \hat{\rho}_j(w[k], w[k + 1])
				\end{array}
			\Bigg)
	\Big\}.
}
\end{definition}

The map $\mathrm{States}$ projects a sequence on the state space.
In the opposite direction, the map $\mathrm{Safe}$ yields the largest invariant subset of a given safe set, under the transition relations.

\begin{definition}[\cite{Chatterjee08concur}]
The {\em cooperative winning set} is the set of nodes in the game graph, from where the players can cooperate to satisfy their objectives.
In a turn-based synchronous game with $n$ players, with objectives $\varphi_j, j \in I$ (that include the transition relations $\rho_j$), it is
\myeq{
	\mathrm{Coop}\Big(\bigwedge_{j \in I} \varphi_j\Big)
	\triangleq
	\Big\{
		u \in \Sigma = 2^\mathcal{V}
			\vert\;
		\exists w \in
		\mathcal{L}\Big(
			\bigwedge_{j \in I} \varphi_j
		\Big).\;
		w[0] = u
	\Big\}.
}
\end{definition}
In other words, the cooperative winning set is the set of nodes from where a centralized controller has a winning strategy.
If the objectives $\varphi_j$ do not include initial conditions\footnote{
	When computing the winning set in a game graph, initial conditions are neglected.
	They are accounted for later, during construction of a transducer.}
(i.e., are tail-closed), then
\myeq{
	\mathrm{Coop}\Big(\bigwedge_{j \in I} \varphi_j\Big)
	= \mathrm{States}\Big( \bigcap_{j \in I} \mathcal{L}(\varphi_j) \Big).
}
The closure of the conjoined specifications is equal to the safe words defined by the cooperative winning set.
This follows from
\myeq{
	\mathrm{States}\big(
		\bigcap_{j \in I}
			\mathcal{L}(\varphi_j)
	\big)
	=
	\mathrm{States}\big(
		\cl{
			\bigcap_{j \in I}
				\mathcal{L}(\varphi_j)
		}
	\big),
}
which implies that
\myeq{
	\mathrm{Coop}\Big(
		\bigwedge_{j \in I} \varphi_j
	\Big)
	=
	\mathrm{States}\Big(
		\cl{
			\bigcap_{j \in I}
				\mathcal{L}(\varphi_j)
		}
	\Big)
		\implies
	\mathrm{Safe}\Big(
		\mathrm{Coop}\Big(
			\bigwedge_{j \in I} \varphi_j
		\Big)
	\Big)
	=
	\mathrm{Safe}\Big(
		\mathrm{States}\Big(
			\cl{
				\bigcap_{j \in I}
					\mathcal{L}(\varphi_j)
			}
		\Big)
	\Big).
}
Observing that each $\varphi_j$ includes $\always \rho_j$, it follows that
$\mathrm{Safe}\Big(
	\mathrm{States}\big(
		\cl{
			\bigcap_{j \in I}
				\mathcal{L}(\varphi_j)
		}
	\big)
\Big)
=
\cl{
	\bigcap_{j \in I}
		\mathcal{L}(\varphi_j)
}$,
therefore
\myeq{
	\mathrm{Safe}\Big(
		\mathrm{Coop}\Big(
			\bigwedge_{j \in I} \varphi_j
		\Big)
	\Big)
	=
	\cl{
		\bigcap_{j \in I}
			\mathcal{L}(\varphi_j)
	}.
}

Define the recurrence formulae
$\mathrm{WF}_j
\triangleq
\bigwedge_r \always \eventually G_{j, r}$, for $j \in I$.
For each player $j$, assume that it has as objective property described by the formula
\myeq{
\label{eq:primitive-specs}
	\varphi_j
	\triangleq \always \rho_j \wedge \mathrm{WF}_j.
}
The property $\varphi$ is in the GR(1) fragment of LTL, so it defines a generalized Streett game of rank 1 (unconditional, i.e., w/o assumptions).
The objectives $\varphi_j$ may be unrealizable.
For each objective $\varphi_j$, we are interested in constructing assumptions that make it realizable.
These assumptions will become objectives for the other agents.
Note that, at this stage
there are no persistence objectives (i.e., no recurrence assumptions yet).

The cooperative winning set can be computed by the fixpoint formula
\myeq{
\label{eq:cooperative-winning-set-fixpoint}
	\mathrm{Coop}\Big(\bigwedge_{j \in I} \varphi_j\Big)
	=
	\nu
	\begin{bmatrix}
		Z_0\\
		Z_1\\
		\vdots\\
		Z_N
	\end{bmatrix}
	.\;
	\begin{bmatrix}
		\Pre^\ast\big( G_{0, 0} \wedge \Pre(Z_1) \big)
			\\
		\Pre^\ast\big( G_{0, 1} \wedge \Pre(Z_2) \big)
			\\
		\vdots
			\\
		\Pre^\ast\big( G_{n - 1, N_{n - 1} - 1 } \wedge \Pre(Z_0) \big)
	\end{bmatrix}
	=
	\nu Z.\;
		\bigwedge_{ j = 0 }^{ n - 1 }
		\bigwedge_{ r = 0 }^{ N_j - 1 }
		\Pre^\ast\big(
			G_{j, r} \wedge \Pre(Z)
		\big).
}
The above computation of the fixpoint involves the recurrence goals of all players.
The aim of decomposing a large system is to modularize the design effort.
This motivates parallelizing the above fixpoint computation.

A slightly different arrangement is also possible.
The goals of each player can be grouped into a vectorized subformula, as follows
\myeq{
	\nu Z.\;
		\bigwedge_{ j = 0 }^{ n - 1 }
		\nu Z_j.\;
		Z \wedge
		\bigwedge_{ r = 0 }^{ N_j - 1 }
		\Pre^\ast\big(
			G_{j, r} \wedge \Pre(Z_j)
		\big)
}
This is expected to increase the sharing of subformulae, because of the overlap of support sets among objectives of a single player.
It is motivated, in part, by the observations of \cref{sec:computing-closure}.
In \cref{sec:computing-closure}, it is shown that the outer fixpoint will be delayed from converging only by states that are live for each objective separately, but not for all objectives jointly.
By increasing coupling between goals, the rate of convergence improves, while still parallelizing the computation, with a granularity at the level of players, instead of individual recurrence goals.
Regarding the variable order, postponing the interaction of BDDs for iterates associated with goals of different players is expected to reduce the coupling between variables, and thus reduce the cost and improve the effectiveness of BDD variable reordering.

\subsection{Computing the closure}
\label{sec:computing-closure}

In this section, we prove that
\myeq{
\label{eq:coop-equivalent-mu-calculus-formula}
	\mathrm{Coop}\Big(\bigwedge_{j \in I} \varphi_j\Big)
	=
	\nu Z.\;
		\bigwedge_{ j = 0 }^{ n - 1 }
		\nu Z_j.\;
		Z \wedge
		\bigwedge_{ r = 0 }^{ N_j - 1 }
		\Pre^\ast\big(
			G_{j, r} \wedge \Pre(Z_j)
		\big).
}
This equality is a consequence of results about vectorized $\mu$-calculus \cite{Lichtensteon91phd}.
Nonetheless, a direct proof is presented below, that gives a better picture of how the sets change during the iteration.

From \cref{sec:cooperative-winning-set}, recall that
$
	\mathrm{Safe}\Big(
		\mathrm{Coop}\Big(
			\bigwedge_{j \in I} \varphi_j
		\Big)
	\Big)
	=
	\cl{
		\bigcap_{j \in I}
			\mathcal{L}(\varphi_j)
	}.
$
In other words, given a conjunction of properties $\mathcal{L}(\varphi_0 \wedge \varphi_1 \wedge \cdots \varphi_{n - 1})$, its closure $\cl{\mathcal{L}(\varphi_0 \wedge \varphi_1 \wedge \cdots \varphi_{n - 1})}$ is equal to the infinite words generated by the restriction of the transition relation to the cooperative winning set.
For this reason, we refer to the closure $\cl{\bigcap_{j \in I} \mathcal{L}(\varphi_j) }$ and the cooperative winning set $\mathrm{Coop}\Big(\bigwedge_{j \in I} \varphi_j\Big)$ interchangeably.

From \cref{eq:cooperative-winning-set-fixpoint}, it suffices to prove that
\myeq{
	\nu Z.\;
		\bigwedge_{ j = 0 }^{ n - 1 }
		\nu Z_j.\;
		Z \wedge
		\bigwedge_{ r = 0 }^{ N_j - 1 }
		\Pre^\ast\big(
			G_{j, r} \wedge \Pre(Z_j)
		\big)
	=
	\nu Z.\;
		\bigwedge_{ j = 0 }^{ n - 1 }
		\bigwedge_{ r = 0 }^{ N_j - 1 }
		\Pre^\ast\big(
			G_{j, r} \wedge \Pre(Z)
		\big).
}
This is equivalent to proving that $\cl{\bigcap_{j \in I} \mathcal{L}(\varphi_j) }$ is equal to the fixpoint iteration that alternates between taking closure and intersection.

\begin{proposition}
\prolab{prop:closure-of-intersections-subset}
For the properties defined by the formulae $\{\varphi_i\}_{i < n}$, the closure of the intersection is a subset of the intersection of closures, i.e.,
$
	\cl{\bigcap_{i=0}^{n - 1} \mathcal{L}(\varphi_i)}
	\subseteq
	\bigcap_{i=0}^{n - 1} \cl{\mathcal{L}(\varphi_i)}.
$
\end{proposition}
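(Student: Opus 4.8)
The plan is to prove the inclusion by a direct set-theoretic argument using the characterization of closure from \cref{def:relative-closure}, namely that $w \in \mathcal{C}(P)$ iff for every $k \in \naturals$ there exists $\sigma \in P$ with $w[0 \dots k] = \sigma[0 \dots k]$. First I would take an arbitrary $w \in \mathcal{C}\big(\bigcap_{i=0}^{n-1} \mathcal{L}(\varphi_i)\big)$ and fix an index $i < n$; the goal is to show $w \in \mathcal{C}(\mathcal{L}(\varphi_i))$, since the claimed inclusion then follows because $i$ was arbitrary. By the closure characterization applied to $P = \bigcap_{i} \mathcal{L}(\varphi_i)$, for each $k$ there is some $\sigma \in \bigcap_{i} \mathcal{L}(\varphi_i)$ agreeing with $w$ on $[0 \dots k]$. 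Since $\bigcap_{i} \mathcal{L}(\varphi_i) \subseteq \mathcal{L}(\varphi_i)$, that same $\sigma$ witnesses $\sigma \in \mathcal{L}(\varphi_i)$ with $w[0 \dots k] = \sigma[0 \dots k]$, so $w \in \mathcal{C}(\mathcal{L}(\varphi_i))$.

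Abstractly, the only facts being used are: (i) $\mathrm{Pref}$ is monotone with respect to set inclusion (indeed $\mathrm{Pref}(\bigcap_i P_i) \subseteq \bigcap_i \mathrm{Pref}(P_i)$, which is immediate from the definition since a finite prefix of a word in the intersection is a prefix of a word in each $P_i$), and (ii) $\mathrm{Safety}$ is monotone, so $\mathcal{C} = \mathrm{Safety} \circ \mathrm{Pref}$ is monotone as a composition of monotone operators. Applying monotonicity of $\mathcal{C}$ to each inclusion $\bigcap_{i} \mathcal{L}(\varphi_i) \subseteq \mathcal{L}(\varphi_j)$ gives $\mathcal{C}(\bigcap_{i} \mathcal{L}(\varphi_i)) \subseteq \mathcal{C}(\mathcal{L}(\varphi_j))$ for every $j$, and intersecting over $j$ yields the claim. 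I would present the spelled-out prefix argument as the main line, since it is self-contained and transparent, and mention the monotonicity phrasing as a one-line alternative.

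There is essentially no obstacle here: the statement is the "easy" half of a closure/intersection identity, and the reverse inclusion (which would require the transition-relation structure of the $\varphi_i$ and is genuinely the substantive direction) is deliberately \emph{not} being asserted in this proposition. The only point requiring a modicum of care is keeping the quantifier on $k$ inside the per-index argument — that is, the witness $\sigma$ depends on both $k$ and is drawn from the full intersection before being weakened to membership in a single $\mathcal{L}(\varphi_i)$ — but no diagonalization or uniform choice of $\sigma$ across $k$ is needed, so this is routine. I expect the proof to be three or four lines.
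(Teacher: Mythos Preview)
Your proposal is correct. The paper states this proposition without proof, treating it as an elementary fact about closures; your direct argument via the prefix characterization from \cref{def:relative-closure} (equivalently, monotonicity of $\mathcal{C} = \mathrm{Safety} \circ \mathrm{Pref}$) is exactly the standard justification and matches the level at which the paper leaves it implicit.
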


\begin{figure}
\small
	\centering
	\includesvg[width=0.5\textwidth]{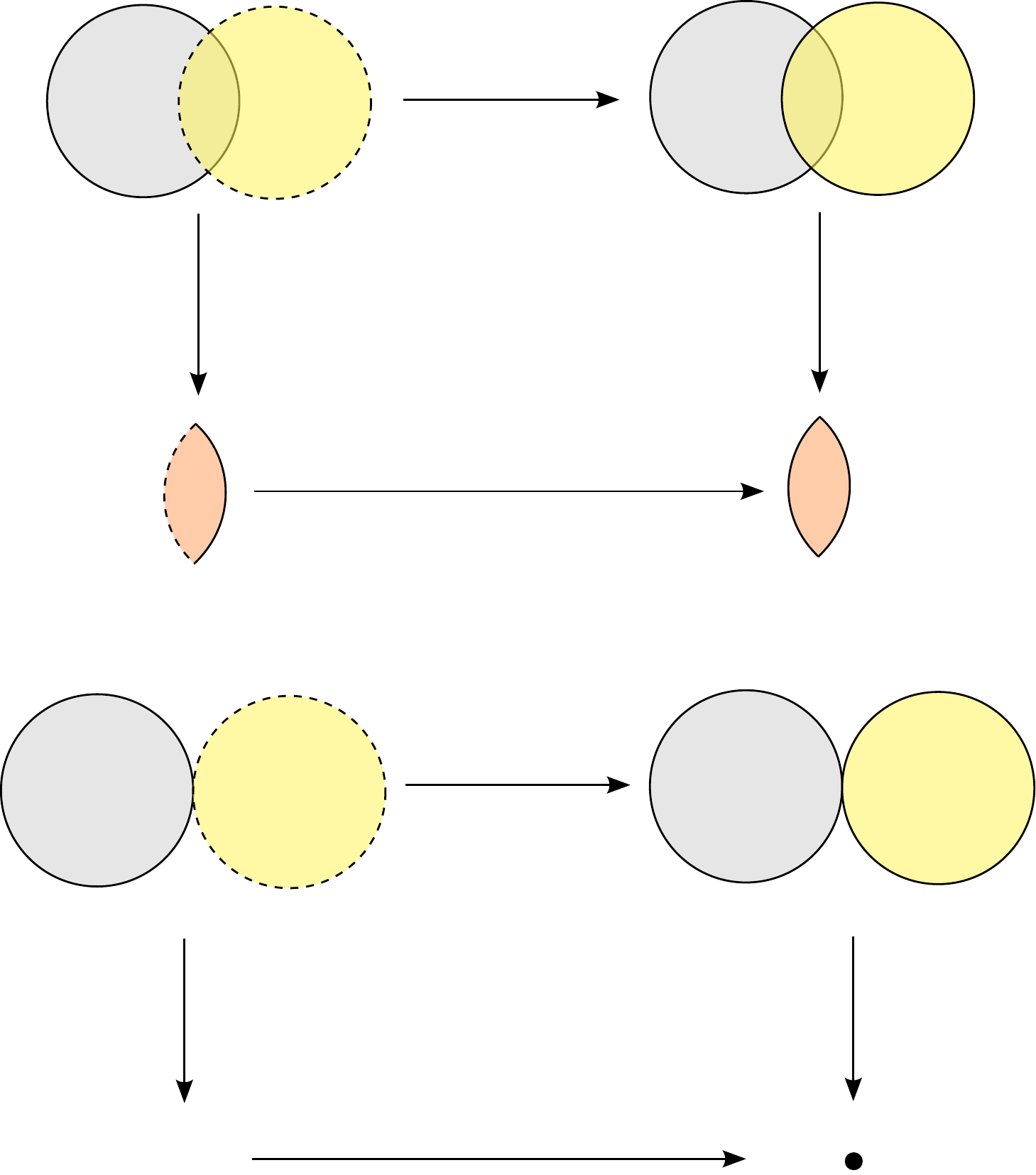}
	\caption{In general, the closure of intersection differs from the intersection of closures.}
	\label{fig:closure_of_intersection}
\end{figure}

The obstruction in parallelizing the computation is that, in general, the opposite containment does {\em not} hold. In that case, the difference arises due to words on the boundary of some property, as proved by the following.

\begin{proposition}
\prolab{prop:excluded-boundary}
Assume that the closure of intersection differs from the intersection of closures, i.e.,
$
	\cl{\bigcap_{i=0}^{n - 1} \lang{\varphi_i}}
	\neq
	\bigcap_{i=0}^{n - 1} \cl{\lang{\varphi_i}}.
$
Then, for each word
$
	w
	\left(\bigcap_{i=0}^{n - 1} \cl{\lang{\varphi_i}}\right)
	\setminus
	\cl{\bigcap_{i=0}^{n - 1} \lang{\varphi_i}},
$
there exists some property $\lang{\varphi_j}$, such that $w$ is on the excluded boundary of property $\lang{\varphi_j}$, i.e.,
$
	w
	\in
	\partial\lang{\varphi_j} \setminus \lang{\varphi_j}.
$
\end{proposition}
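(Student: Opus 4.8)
The plan is to argue by contraposition at the level of individual words: I will show that if a word $w$ lies in $\bigcap_{i} \cl{\lang{\varphi_i}}$ but fails to lie in the closure of the intersection, then $w$ must sit on the excluded boundary $\partial\lang{\varphi_j}\setminus\lang{\varphi_j}$ of at least one of the properties. First I would unwind the membership $w\notin\cl{\bigcap_i\lang{\varphi_i}}$ using \cref{def:relative-closure}: since $\cl{\bigcap_i\lang{\varphi_i}}=\mathrm{Safety}(\mathrm{Pref}(\bigcap_i\lang{\varphi_i}))$, failure of membership means there is some index $k\in\naturals$ such that the finite prefix $w[0\dots k]$ is not the prefix of any word in $\bigcap_i\lang{\varphi_i}$; equivalently, no single infinite word simultaneously extends $w[0\dots k]$ and lies in every $\lang{\varphi_i}$.

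Next I would exploit the other half of the hypothesis, $w\in\bigcap_i\cl{\lang{\varphi_i}}$: for every $i$ and every $k$, the prefix $w[0\dots k]$ does extend to some word in $\lang{\varphi_i}$ individually. Combining the two observations, for the particular bad $k$ above we have, for each $i$, a witness $\sigma^{(i)}\in\lang{\varphi_i}$ with $\sigma^{(i)}[0\dots k]=w[0\dots k]$, yet no common witness exists across all $i$. I would then take $k\to\infty$: the prefixes $w[0\dots k]$ approximate $w$ arbitrarily well in the prefix metric, so $w$ is a limit point of each $\lang{\varphi_i}$, hence $w\in\overline{\lang{\varphi_i}}$ for all $i$ — but the failure of a common extension is exactly an obstruction that must be blamed on some specific $\varphi_j$. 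To name that $j$, observe that if $w$ belonged to $\lang{\varphi_j}$ itself for every $j$ then $w$ would be a common witness for all prefixes, contradicting the bad-$k$ conclusion; therefore $w\notin\lang{\varphi_j}$ for at least one $j$. Since we already have $w\in\cl{\lang{\varphi_j}}$, and $\partial\lang{\varphi_j}=\cl{\lang{\varphi_j}}\cap\cl{\Sigma^\omega\setminus\lang{\varphi_j}}$ with $w\notin\lang{\varphi_j}$ placing $w$ in the latter closure trivially, we conclude $w\in\partial\lang{\varphi_j}\setminus\lang{\varphi_j}$, which is the claim.

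The main obstacle I anticipate is the step that extracts a \emph{single} offending index $j$ from the global failure of a common extension. The naive argument only tells us that the family $\{\lang{\varphi_i}\}$ has no common word extending $w[0\dots k]$; it does not immediately localize the blame. The clean way out is the dichotomy just used — either $w$ lies in \emph{all} the $\lang{\varphi_j}$ (and then $w$ itself witnesses every prefix, so $w\in\bigcap_i\lang{\varphi_i}\subseteq\cl{\bigcap_i\lang{\varphi_i}}$, contradiction), or it misses at least one, which is the desired $j$; combined with $w\in\bigcap_i\cl{\lang{\varphi_i}}$ this is enough. One should double-check that $\cl{\lang{\varphi_i}}$ as defined via $\mathrm{Safety}\circ\mathrm{Pref}$ really is the topological closure in the prefix metric — this is remarked immediately after \cref{def:relative-closure}, so it may be cited rather than reproved — and that $\partial$ is being used in the standard sense of closure minus interior, with $\lang{\varphi_i}$'s interior contained in $\lang{\varphi_i}$, so that $\cl{\lang{\varphi_i}}\setminus\lang{\varphi_i}\subseteq\partial\lang{\varphi_i}$.
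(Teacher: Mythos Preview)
Your argument is correct. The essential content is the short dichotomy you isolate at the end: since $\bigcap_i \lang{\varphi_i} \subseteq \cl{\bigcap_i \lang{\varphi_i}}$, the hypothesis $w \notin \cl{\bigcap_i \lang{\varphi_i}}$ already gives $w \notin \bigcap_i \lang{\varphi_i}$, hence $w \notin \lang{\varphi_j}$ for some $j$; combined with $w \in \cl{\lang{\varphi_j}}$ from the other half of the hypothesis, this yields $w \in \cl{\lang{\varphi_j}} \setminus \lang{\varphi_j} \subseteq \partial\lang{\varphi_j} \setminus \lang{\varphi_j}$. The detour through a specific ``bad $k$'' and the per-$i$ witnesses $\sigma^{(i)}$ is not needed for the conclusion and can be dropped.

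As for comparison: the paper does not supply a proof of this proposition. It is stated and immediately followed by a sentence explaining what membership in $\partial\lang{\varphi_j}\setminus\lang{\varphi_j}$ means (safe with respect to $\lang{\varphi_j}$ but not live), which is commentary on the conclusion rather than a derivation of it. Your write-up therefore fills a gap; just trim it to the three-line argument above and cite the remark after \cref{def:relative-closure} for the identification of $\cl{\cdot}$ with topological closure, as you already plan to do.
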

In any ball around a word $w$ in the boundary $\partial\lang{\varphi_j}$, there exists some word $z$ in the property $\lang{\varphi_j}$.
It follows that, for any prefix $p$ of word $w$, there exists some word $z \in \lang{\varphi_j}$ that has the prefix $p$.
As a result, the word $w$ is safe with respect to $\lang{\varphi_j}$, but not live.

Next, we define the iteration that corresponds to \cref{eq:coop-equivalent-mu-calculus-formula}, and prove that it converges to the cooperative winning set.

\begin{definition}
\deflab{def:iterate-intersection-closure}
Define $P_j \triangleq \mathcal{L}(\varphi_j)$.
Initialize $Q^0 \triangleq \Sigma^\omega$, and iterate for $k \in \naturals$
\myeq{
	R_j^k
	&\triangleq
	\cl{ Q^k \cap P_j },
		\\
	Q^{k + 1}
	&\triangleq
	\bigcap_{j \in I}
	R_j^k.
}
\end{definition}

We are interested in proving that the iteration of \cref{def:iterate-intersection-closure} reaches as fixpoint the set $\cl{\bigcap_{j \in I} P_j }$.
For this purpose, we will prove that
\begin{itemize}
	\item $\cl{\bigcap_{j \in I} P_j } \subseteq Q^k$ remains invariant (\cref{prop:invariant}), and
	\item if the current iterate $Q^k$ differs from $\cl{\bigcap_{j \in I} P_j }$, then $\abs{\mathrm{States}(Q^{k + 1} ) } < \abs{\mathrm{States}(Q^k ) }$ (\cref{prop:variant-closure-intersection}).
\end{itemize}

\begin{proposition}[Invariant]
\prolab{prop:invariant}
For all $k \in \naturals$, $\cl{\bigcap_{j \in I} P_j} \subseteq Q^k$.
\end{proposition}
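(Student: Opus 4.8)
The plan is to prove the inclusion $\cl{\bigcap_{j \in I} P_j} \subseteq Q^k$ by induction on $k$, establishing a slightly stronger statement that is visibly preserved by the iteration: for every $k$, each $R_j^k$ satisfies $\cl{\bigcap_{i \in I} P_i} \subseteq R_j^k$ as well. The base case $k = 0$ is immediate, since $Q^0 = \Sigma^\omega$ and every property is a subset of $\Sigma^\omega$. For the inductive step, I would assume $\cl{\bigcap_{i \in I} P_i} \subseteq Q^k$ and derive the inclusion for $R_j^k$ and then for $Q^{k+1}$.

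The key step is monotonicity of closure together with the fact that the closure operator $\mathcal{C}$ is idempotent and that $P \subseteq \cl{P}$ for any property $P$. From the induction hypothesis $\bigcap_{i \in I} P_i \subseteq \cl{\bigcap_{i \in I} P_i} \subseteq Q^k$, and trivially $\bigcap_{i \in I} P_i \subseteq P_j$, we get $\bigcap_{i \in I} P_i \subseteq Q^k \cap P_j$. Applying the (monotone) closure operator to both sides and using idempotence on the left, $\cl{\bigcap_{i \in I} P_i} = \cl{\cl{\bigcap_{i \in I} P_i}} \subseteq \cl{Q^k \cap P_j} = R_j^k$. Since this holds for every $j \in I$, we conclude $\cl{\bigcap_{i \in I} P_i} \subseteq \bigcap_{j \in I} R_j^k = Q^{k+1}$, which closes the induction. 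Note that the needed properties of $\mathcal{C}$ — monotonicity, extensivity ($P \subseteq \cl{P}$), and idempotence ($\cl{\cl{P}} = \cl{P}$) — are standard closure-operator facts following directly from \cref{def:relative-closure}, since $\mathrm{Pref}$ and $\mathrm{Safety}$ are both monotone and $\cl{P}$ is by construction a closed (limit) set containing $P$.

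The main obstacle, such as it is, is purely bookkeeping: one must be careful that the strengthened inductive invariant (the inclusion for each $R_j^k$, not just for $Q^k$) is what carries the argument, since $Q^{k+1}$ is defined from the $R_j^k$ rather than directly from $Q^k$. Once the invariant is phrased to mention the $R_j^k$, the proof is a one-line application of monotonicity and idempotence of closure at each step. I do not expect any genuine difficulty; the subtlety the authors are really after lies in the companion variant argument (\cref{prop:variant-closure-intersection}), where the strict decrease of $\abs{\mathrm{States}(Q^k)}$ is established, and that is where \cref{prop:excluded-boundary} about excluded-boundary words does the real work.
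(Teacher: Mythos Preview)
Your proof is correct and in fact more direct than the paper's. Both argue by induction on $k$ with the same base case, but in the inductive step the paper intersects the hypothesis $\cl{\bigcap_i P_i} \subseteq Q^k$ with $P_j$ on the left to obtain $P_j \cap \cl{\bigcap_i P_i} \subseteq Q^k \cap P_j$, and is then obliged to show separately that $\cl{\bigcap_i P_i} \subseteq \cl{P_j \cap \cl{\bigcap_i P_i}}$, which it does via a boundary decomposition $\cl{A} = A \cup \partial A$ and the distributivity of closure over union. You instead use extensivity, $\bigcap_i P_i \subseteq \cl{\bigcap_i P_i} \subseteq Q^k$, together with the trivial inclusion $\bigcap_i P_i \subseteq P_j$, to get $\bigcap_i P_i \subseteq Q^k \cap P_j$ directly; a single application of monotonicity of $\cl{\cdot}$ then yields $\cl{\bigcap_i P_i} \subseteq \cl{Q^k \cap P_j} = R_j^k$. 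This avoids the boundary computation entirely and is the cleaner argument.

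Two minor remarks on your write-up: idempotence of $\cl{\cdot}$ is not actually needed here --- monotonicity alone gives $\cl{\bigcap_i P_i} \subseteq \cl{Q^k \cap P_j}$ from $\bigcap_i P_i \subseteq Q^k \cap P_j$ --- and no genuine strengthening of the inductive invariant is required: the stated invariant $\cl{\bigcap_i P_i} \subseteq Q^k$ already suffices to derive the inclusion for each $R_j^k$ and hence for $Q^{k+1}$.
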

\begin{proof}
By induction:

\paragraph{Case $k = 0$}
It is
$
	\cl{\bigcap_{j \in I} P_j}
	\subseteq
	\Sigma^\omega
	= Q_0
$.

\paragraph{Case $k > 0$}
Assume that $\cl{\bigcap_{j \in I} P_j} \subseteq Q^k$.
We will prove that $\cl{\bigcap_{j \in I} P_j} \subseteq Q^{k + 1}$.
By definition of the iterates
\myeq{
\label{eq:reduce-to-ambient-closure}
	Q^{k + 1}
	= \bigcap_{j \in I} R_j^k
	= \bigcap_{j \in I} \cl{Q^k \cap P_j}.
}
By the induction hypothesis,
\myeq{
\label{eq:aux-containment}
	\cl{\bigcap_{i \in I} P_i}
	\subseteq
	Q^k
		\implies
	P_j \cap \cl{\bigcap_{i \in I} P_i}
	\subseteq
	Q^k \cap P_j
		\implies
	\cl{P_j \cap \cl{\bigcap_{i \in I} P_i} }
	\subseteq
	\cl{Q^k \cap P_j}.
}
Therefore, it suffices to prove that $\cl{\bigcap_{i \in I} P_i} \subseteq \cl{P_j \cap \cl{\bigcap_{i \in I} P_i} }$.
It is
\myeq{
	\bigcap_{i \in I} P_i
	\subseteq
	P_j
		\implies
	P_j \cap \bigcap_{i \in I} P_i
	=
	\bigcap_{i \in I} P_i,
}
so
\myeq{
	P_j \cap \cl{\bigcap_{i \in I} P_i}
		&=
	P_j
	\cap
	\Big(
		(\bigcap_{i \in I} P_i)
		\cup
		(\partial \bigcap_{i \in I} P_i)
	\Big)
		=
	\Big(P_j \cap \bigcap_{i \in I} P_i)\Big)
	\cup
	\Big(P_j \cap \partial \bigcap_{i \in I} P_i\Big)
		\\
	&=\bigcap_{i \in I} P_i
	\cup
	\Big(P_j \cap \partial \bigcap_{i \in I} P_i\Big)
		\implies
		\\
	\cl{P_j \cap \cl{\bigcap_{i \in I} P_i} }
	&=
	\cl{\bigcap_{i \in I} P_i \cup
		\big(P_j \cap \partial \bigcap_{i \in I} P_i}\big)
	=
	\cl{\bigcap_{i \in I} P_i} \cup \cl{P_j \cap \partial \bigcap_{i \in I} P_i}
		\implies
		\\
	\cl{\bigcap_{i \in I} P_i}
	&\subseteq
	\cl{P_j \cap \cl{\bigcap_{i \in I} P_i} }
}
By the above result, \cref{eq:aux-containment}, and the induction hypothesis $\cl{\bigcap_{j \in I} P_j} \subseteq Q^k$, it follows that
\myeq{
	\forall j \in I.\;
	\cl{\bigcap_{i \in I} P_i}
	\subseteq
	\cl{P_j \cap \cl{\bigcap_{i \in I} P_i} }
	\subseteq
	\cl{Q^k \cap P_j}
		\implies
	\cl{\bigcap_{i \in I} P_i}
	\subseteq
	\bigcap_{j \in I} \cl{Q^k \cap P_j}
}
Using \cref{eq:reduce-to-ambient-closure}, it follows that
\myeq{
	\cl{\bigcap_{i \in I} P_i}
	\subseteq
	\bigcap_{j \in I} \cl{Q^k \cap P_j}
	=
	Q^{k + 1}.
}
This is the inductive claim.
\end{proof}

\begin{proposition}[Variant]
\prolab{prop:variant-closure-intersection}
If $\cl{\bigcap_{j \in I} P_j} \neq Q^k$, then $\abs{\mathrm{States}(Q^{k + 1} ) } < \abs{\mathrm{States}(Q^k ) }$.
\end{proposition}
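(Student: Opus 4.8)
The plan is to argue by contraposition: assuming $Q^k$ strictly contains $\cl{\bigcap_{j \in I} P_j}$ (we already know $\cl{\bigcap_j P_j} \subseteq Q^k$ from \cref{prop:invariant}, so the difference is genuine), I will exhibit at least one state $s \in \mathrm{States}(Q^k)$ that is not in $\mathrm{States}(Q^{k+1})$, while also showing no new states can enter, so that $\mathrm{States}(Q^{k+1}) \subsetneq \mathrm{States}(Q^k)$. The inclusion $\mathrm{States}(Q^{k+1}) \subseteq \mathrm{States}(Q^k)$ is the easy half: $Q^{k+1} = \bigcap_{j} \cl{Q^k \cap P_j}$, and since $Q^k \cap P_j \subseteq Q^k$, taking closure keeps us inside $\cl{Q^k}$; but by the invariant $Q^k$ already contains $\cl{\bigcap_j P_j}$, which includes $\always\rho_j$ for each $j$, so $Q^k$ is itself closed under the transition relations (indeed $Q^k = \mathrm{Safe}(\mathrm{States}(Q^k))$, by the same closure argument used in \cref{sec:cooperative-winning-set}), hence $\cl{Q^k} = Q^k$ and $\mathrm{States}(\cl{Q^k \cap P_j}) \subseteq \mathrm{States}(Q^k)$.

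For the strict decrease, I would locate a state to remove. Since $\cl{\bigcap_j P_j} \subsetneq Q^k$ and (by a fixpoint/monotonicity argument) $Q^k$ is a decreasing sequence that has not yet stabilized at $\cl{\bigcap_j P_j}$, there is a word $w \in Q^k \setminus \cl{\bigcap_j P_j}$. Apply \cref{prop:excluded-boundary} with the ambient property $Q^k$ in place of $\Sigma^\omega$ — i.e.\ run the same reasoning on $\{Q^k \cap P_j\}_j$, noting $\bigcap_j(Q^k\cap P_j) = Q^k \cap \bigcap_j P_j$ has the same closure as $\bigcap_j P_j$ relative to the closed set $Q^k$ — to conclude there is some index $j$ with $w \in \partial(Q^k\cap P_j) \setminus (Q^k \cap P_j)$: $w$ is safe but not live for $\varphi_j$ within $Q^k$. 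The key point I want to extract is that there is then a state $s$ appearing along $w$ (or reachable from the relevant prefix) that lies in $\mathrm{States}(Q^k)$ but from which, once you are genuinely forced onto the boundary stratum, no word of $Q^k \cap P_j$ can pass; more carefully, I will argue that the set of states that survive into $\cl{Q^k \cap P_j}$ for \emph{every} $j$ is precisely the cooperative winning set restricted to $Q^k$, and if $Q^k$ strictly exceeds that set then at least one of its states fails the liveness test for some goal and is dropped. Combined with the no-new-states half, this gives $\abs{\mathrm{States}(Q^{k+1})} < \abs{\mathrm{States}(Q^k)}$.

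The main obstacle is the passage from "there is a boundary \emph{word} $w$ that is excluded" to "there is a boundary \emph{state} $s$ that is excluded from $\mathrm{States}(Q^{k+1})$." A single bad word does not immediately forbid a state from reappearing in some other good word, so I expect the real content to be a characterization of $\mathrm{States}(\cl{Q^k \cap P_j})$ as the largest subset of $\mathrm{States}(Q^k)$ from which player-cooperation can still meet goal $j$ infinitely often while staying in $Q^k$ — essentially $\mathrm{States}$ of a one-pair cooperative Streett winning region inside $Q^k$ — and then observing that $\bigcap_j \mathrm{States}(\cl{Q^k\cap P_j})$ equals the full cooperative winning set only when the iteration has converged. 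I would lean on the fixpoint characterization \cref{eq:cooperative-winning-set-fixpoint}: each $\cl{Q^k \cap P_j}$ corresponds to one $\Pre^\ast(G_{j,r}\wedge\Pre(\cdot))$ factor, and strict non-convergence of the outer $\nu$ forces at least one state to be pruned in some factor at step $k$, which is exactly what is needed.
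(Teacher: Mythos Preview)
Your containment half is essentially right, though the justification that ``$Q^k$ contains $\cl{\bigcap_j P_j}$, which includes $\always\rho_j$, so $Q^k$ is closed'' is not a valid inference---containing a closed set does not make a set closed. The paper simply observes that $Q^0=\Sigma^\omega$ is closed and each $Q^{k+1}$ is an intersection of closures, hence closed; this is all you need for $\cl{Q^k}=Q^k$.

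The hard half has a real gap. You correctly isolate the obstacle---passing from a bad \emph{word} to a bad \emph{state}---but your proposed resolution is circular. You want to ``lean on the fixpoint characterization \cref{eq:cooperative-winning-set-fixpoint}'' and argue that non-convergence of the outer $\nu$ forces a state to be pruned. But the identification of the $Q^k$ iteration with the iterates of either $\mu$-calculus formula is precisely what this section is establishing; you cannot assume it. Even if you try to set up a step-by-step correspondence $\mathrm{States}(Q^k)=Z^{(k)}$, note that in the inner fixpoint $\nu Z_j.\,Z\wedge\bigwedge_r\Pre^\ast(G_{j,r}\wedge\Pre(Z_j))$ the $\Pre^\ast$ is \emph{not} restricted to $Z$, whereas words in $Q^k\cap P_j$ must stay inside $\mathrm{States}(Q^k)$; so the two computations need not agree iterate-by-iterate, and the correspondence itself requires proof. (Also, one $\cl{Q^k\cap P_j}$ corresponds to the whole inner $\nu Z_j$, not to a single $\Pre^\ast$ factor, since $P_j$ may have several recurrence goals.)

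The paper fills this gap with a direct combinatorial argument rather than an appeal to the fixpoint formula. Pick $w\in Q^k\setminus\cl{\bigcap_j P_j}$; by finiteness of the game graph $w$ is ultimately periodic, with a suffix visiting a set $M$ of nodes. From $M$, for each $j$ some SCC meeting all goals of $P_j$ is reachable inside $\mathrm{States}(Q^k)$, but no single SCC meeting \emph{all} players' goals is reachable (else $w\in\cl{\bigcap_j P_j}$). Chase reachable SCCs inside $\mathrm{States}(Q^k)$ to form a finite DAG; any leaf SCC is terminal in $\mathrm{States}(Q^k)$ and must miss some goal $G_{j,r}$. Its nodes therefore lie in $\mathrm{States}(Q^k)$ but cannot reach that goal without leaving $\mathrm{States}(Q^k)$, hence are absent from $\mathrm{States}(\cl{Q^k\cap P_j})\supseteq\mathrm{States}(Q^{k+1})$. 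This is the state you were looking for, produced without invoking the fixpoint equality you are trying to prove.
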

\begin{proof}
By definition of the iterates, $Q^{k + 1} = \bigcap_{j \in I} \cl{Q^k \cap P_j}$.
The closure $\cl{Q^0} = \cl{\Sigma^\omega} = \Sigma^\omega = Q^0$, so the set $Q^0$ is closed.
As the intersection of closed sets, the set $Q^k, k > 0$ is closed.
It is
\myeq{
Q^k \cap P_j
\subseteq
Q^k
	\implies
\cl{Q^k \cap P_j}
\subseteq
\cl{Q^k}
=
Q^k
	\implies
Q^{k + 1}
=
\bigcap_{j \in I}
\cl{Q^k \cap P_j}
\subseteq
Q^k.
}

It remains to prove that $Q^{k + 1} \neq Q^k$.
We will show that taking the closures $\cl{Q^k \cap P_j}$ will yield at least one set $\mathrm{States}(R_j^k) \subsetneq \mathrm{States}(Q^k)$.
By \cref{prop:invariant}, $\cl{\bigcap_{j \in I} P_j} \subseteq Q^k$, and by hypothesis they are not equal.
So, the difference $K \triangleq Q^k \setminus \cl{\bigcap_{j \in I} P_j}$ is non-empty.
By induction, the containment $Q^{k + 1} \subseteq Q^k$ implies that, for any $k > 0$, it is $Q^k \subseteq Q^1 = \bigcap_{j \in I} \cl{P_j}$.
So, $K \subseteq \bigcap_{j \in I} \cl{P_j}$.
This result is analogous to \cref{prop:excluded-boundary}, but for an arbitrary iteration along the computation.

Consider any word $w \in K$.
By the previous, $w \in \bigcap_{j \in I} \cl{P_j}$, and $w \notin \cl{\bigcap_{j \in I} P_j}$.
The game graph is finite, so, by the pigeonhole principle, the word $w$ has a finite prefix and a finite cycle as suffix.
Denote by $M$ the non-empty set of nodes in the suffix.
The word $w \in \bigcap_{j \in I} \cl{P_j}$, so, from each node in $M$, for each $j \in I$, a strongly connected component (SCC) that intersects all recurrence sets of $P_j$ is reachable.
The word $w$ is not in $\cl{\bigcap_{j \in I} P_j}$.
So an SCC that intersects the recurrence sets of all properties is {\em not} reachable from any node in $M$.

Define $S_j$ the SCC that intersects the recurrence sets of $P_j$ and is reachable from $M$.
The set $R_j^k = \cl{Q^k \cap P_j}$, so nodes in $\mathrm{States}(R_j^k)$ can reach only the intersection $S_j \cap \mathrm{States}(Q^k)$.
If $S_j \cap \mathrm{States}(Q^k) = \emptyset$, and there are no other SCCs that intersect a $P_j$ and are reachable from $M$, then the nodes in $M \subseteq \mathrm{States}(Q^k)$ are not in $\mathrm{States}(R_j^k)$, and the claim holds.

Suppose that $S_j \cap \mathrm{States}(Q^k) \neq \emptyset$.
Consider the nodes in $S_j \cap \mathrm{States}(Q^k)$.
These nodes are in $\mathrm{States}(Q^k)$.
So the same arguments apply, as those we developed for nodes in $M$.
This leads to new SCCs, that form a directed acyclic graph (DAG).
By finiteness of the game graph, the induction will terminate.

Consider a leaf of the DAG.
It is an SCC terminal in $\mathrm{States}(Q^k)$, that does not intersect at least one recurrence set, of at least one property $P_j$.
(If not, then the SCC would satisfy $\bigcap_{j \in I} P_j$.
By construction, the leaf SCC is reachable from the nodes in $M$ (suffix).
This implies that from nodes in $M$, an SCC satisfying $\bigcap_{j \in I} P_j$ is reachable.
It follows that $w$ is in $\cl{\bigcap_{j \in I} P_j}$.
This contradicts the definition of $w$, as a word not in $\cl{\bigcap_{j \in I} P_j}$.)
Therefore, there is at least one recurrence set of $P_j$, which is unreachable from the nodes in the leaf SCC, without exiting the set $\mathrm{States}(Q^k)$.
It follows that none of these nodes is contained in $\mathrm{States}(\cl{Q^k \cap P_j} )$.
These nodes are in $\mathrm{States}(Q^k)$, so $R_j^k = \cl{Q^k \cap P_j} \subsetneq Q^k$.
\end{proof}

We have proved the following.
\begin{theorem}
The closure of intersection $\cl{\bigcap_{j \in I} \mathcal{L}(\varphi_j) }$ is equal to the fixpoint of the iterated intersection of closures $Q^{k + 1} = \bigcap_{j \in I} \cl{ Q^k \cap \mathcal{L}(\varphi_j) }$, starting from $Q^0 = \Sigma^\omega$.
\end{theorem}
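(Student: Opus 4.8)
The plan is to combine the two propositions that were just established —- the invariant (\cref{prop:invariant}) and the variant (\cref{prop:variant-closure-intersection}) —- in a standard termination argument, and then to identify the fixpoint. First I would observe that the iteration of \cref{def:iterate-intersection-closure} is well-defined and that, by the variant proposition, as long as the current iterate $Q^k$ differs from $\cl{\bigcap_{j \in I} P_j}$ we have $\abs{\mathrm{States}(Q^{k+1})} < \abs{\mathrm{States}(Q^k)}$. Since $\mathrm{States}(Q^k) \subseteq \Sigma$ and $\Sigma = 2^{\mathcal{V}}$ is finite, this strict decrease can happen only finitely many times; hence there is some $k^\ast$ with $\mathrm{States}(Q^{k^\ast+1}) = \mathrm{States}(Q^{k^\ast})$. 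From $Q^{k+1} \subseteq Q^k$ (shown inside the proof of the variant) together with the fact that all $Q^k$ with $k>0$ are closed, and that a closed set is determined by its state set via $\mathrm{Safe}$ — equivalently, equality of state sets of two closed, $\mathrm{Safe}$-invariant subsets forces equality of the sets — I conclude $Q^{k^\ast+1} = Q^{k^\ast}$, i.e. the iteration has reached a fixpoint.

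Next I would pin down \emph{which} fixpoint. By the contrapositive of the variant proposition, the only way for $Q^{k^\ast+1} = Q^{k^\ast}$ is that $Q^{k^\ast} = \cl{\bigcap_{j \in I} P_j}$: if $Q^{k^\ast}$ differed from $\cl{\bigcap_{j\in I} P_j}$, the variant would force a strict decrease, contradicting that we are at a fixpoint. Conversely, the invariant (\cref{prop:invariant}) guarantees $\cl{\bigcap_{j \in I} P_j} \subseteq Q^k$ for every $k$, so this containment is never violated along the way and in particular at $k^\ast$; combined with the previous sentence it gives the equality. Unwinding $P_j = \mathcal{L}(\varphi_j)$, this says precisely that the fixpoint of the iteration $Q^{k+1} = \bigcap_{j\in I} \cl{Q^k \cap \mathcal{L}(\varphi_j)}$ starting from $Q^0 = \Sigma^\omega$ equals $\cl{\bigcap_{j \in I} \mathcal{L}(\varphi_j)}$, which is the claim.

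The main obstacle I anticipate is the bookkeeping in the step ``equality of state sets $\Rightarrow$ equality of sets'': one must be careful that this holds because each $Q^k$ ($k>0$) is closed and invariant under the transition relations, so that it coincides with $\mathrm{Safe}(\mathrm{States}(Q^k))$ — a point already exploited in \cref{sec:cooperative-winning-set} for $\cl{\bigcap_{j \in I}\mathcal{L}(\varphi_j)}$ and for which the closedness observations in the proof of \cref{prop:variant-closure-intersection} (``$\cl{Q^0}=Q^0$, hence each $Q^k$ is closed as an intersection of closed sets'') are exactly what is needed. Everything else is a routine finiteness/monotone-decreasing-sequence argument, so the theorem follows immediately once Propositions~\ref{prop:invariant} and~\ref{prop:variant-closure-intersection} are in hand; indeed the excerpt signals this by prefacing the statement with ``We have proved the following.''
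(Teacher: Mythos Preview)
Your overall strategy matches the paper's: the theorem is meant as an immediate corollary of \cref{prop:invariant} and \cref{prop:variant-closure-intersection}, which is why the paper writes ``We have proved the following'' and gives no separate argument.

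The one problematic step is the detour you yourself flag as the ``main obstacle'': deducing $Q^{k^\ast+1}=Q^{k^\ast}$ from $\mathrm{States}(Q^{k^\ast+1})=\mathrm{States}(Q^{k^\ast})$ via the claim $Q^k=\mathrm{Safe}(\mathrm{States}(Q^k))$. The paper establishes that identity only for the single set $\cl{\bigcap_j\mathcal{L}(\varphi_j)}$ (\cref{sec:cooperative-winning-set}), not for the intermediate iterates, and closedness alone does not give it---distinct closed subsets of $\Sigma^\omega$ can share the same state set. Fortunately the detour is unnecessary. The contrapositive of \cref{prop:variant-closure-intersection} applies directly to the failure of the strict decrease $|\mathrm{States}(Q^{k^\ast+1})|<|\mathrm{States}(Q^{k^\ast})|$ and already yields $Q^{k^\ast}=\cl{\bigcap_j P_j}$, without first showing that a fixpoint has been reached. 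That this $Q^{k^\ast}$ \emph{is} a fixpoint then follows from the invariant $\cl{\bigcap_j P_j}\subseteq Q^{k^\ast+1}$ together with the monotonicity $Q^{k^\ast+1}\subseteq Q^{k^\ast}$ already established inside the proof of the variant; no appeal to $\mathrm{Safe}$ is needed.
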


After the cooperative winning set $C = \mathrm{Coop}(\bigwedge_{j \in I} \varphi_j )$ has been computed, each transition relation $\rho_j$ is restricted to it, by conjoining it with $\rho_C \triangleq C \wedge C'$.
As proved in \cite{Chatterjee08concur} for the case of two players, the restriction to the cooperative winning set satisfies two properties:
\begin{enumerate}
	\item it is not restrictive, because it removes edges from the transition relation $\rho_i$ of player $i$, only if they lead outside the closure with respect to some other player $\rho_j$.
	\item among all non-restrictive properties, the restriction to the cooperative winning set is minimal, as measured by the cardinality of the edges removed from the game graph.
\end{enumerate}
In addition, the safety property $\always C$ is added to the assumptions of each agent.
The specifications become (redefining \cref{eq:primitive-specs} by adding a safety assumption)
\myeq{
	\varphi_j
	\triangleq
	(\always \rho_C)
		\strictarrow
	\big(
		\always \rho_j \wedge
		\always \rho_C \wedge
		\mathrm{WF}_j
	\big).
}

	\section{Construction of weak fairness assumptions for a single goal}

In this section, we introduce the main elements for the proposed algorithm, for the case of two agents.
Let us consider a single recurrence goal.
More than one goals are treated by constructing a transducer that cycles through them, and communicating to other players the currently pursued goal.
This is described at the end of \cref{sec:nested-games}.
This need for coordination of pursued goals is unavoidable, because, otherwise, livelock arises naturally.

Our objective is to find assumptions that allow covering the cooperative winning set.
This problem has been solved for a single agent, and full LTL, in \cite{Chatterjee08concur}.
Here, we are interested in assumptions restricted to the GR(1) fragment, and in multiple players.
Recall that in \cref{sec:property-closure} we conjoined the transition relations with the requirement that each player stays inside the cooperatively winning set $C$, similarly to \cite{Chatterjee08concur}.

Let $G = \always \eventually G_{j = 0, r = 0}$ be the recurrence goal of interest, of player 0.
Player 0 can force a visit to the set $G$ from any node in the attractor $A_0 \triangleq \Attr_0(G)$.
But $A_0$ may not cover the cooperative winning set $C$.
By the definition of $C$, the set $A_0$ is reachable from $C \setminus A_0$.
Since nodes in $C \setminus A_0$ do not belong to $A_0$, player 0 cannot force a transition from $C \setminus A_0$ to $A_0$.
By determinacy of turn-based synchronous games with full information, player 1 must be able to force such a transition.
It follows that the attractor $\Attr_1(A_0)$ is non-empty.
This form of argument is reminiscent of the solution of parity games \cite{McNaughton93apal}.

We want to construct an {\em unconditional} assumption that player 0 makes about player 1.
Unconditional means that player 1 should be able to realize the assumption, without assuming any liveness property about player 0.
If it needed to assume a liveness property about player 0, that would create circularity, causing trivial realizability.

A first attempt could be $\always \eventually (\Attr_1(A_0) \rightarrow A_0)$.
This is insufficient, because player 0 may be able to exit the set $\Attr_1(A_0)$, but go to $\neg A_0$ -- not to $A_0$.
So player 0 must be able to restrict player 1 inside a subset $K \subseteq \Attr_1(A_0)$, until player 0 forces its way to $A_0$, obliged by an assumption of the form $\always \eventually (K \rightarrow A_0)$.
The inclusion $K \subseteq \Attr_1(A_0)$ ensures that player 1 cannot trap player 0 inside $K$, which would cause trivial realizability.
Such an assumption may not exist, a case that is addressed later.

This exist requirement can be formalized by defining\footnote{
	The greatest fixpoint operator $\nu$ is defined as $\nu X.\; f(X) \triangleq \neg \mu X.\; \neg f(X)$.}
the {\em controlled-escape} subset of a set $S$,
\myeq{
	\Trap_j(S, E)
	\triangleq
	\nu X.\;
	E \vee (\CPre_j(X) \wedge S).
}
The set $\Trap_j(S, E)$ contains those nodes, from where player $j$ can force to either remain inside $\Trap_j(S, E)$, or move to $E$, or is already in $E$.
Note that $\Trap_j(S \vee E, \bot)$ is different, because it requires the ability to remain inside $S \vee E$.

Define $B_0 \triangleq \Attr_1(A_0)$, and $r_0 \triangleq (\Trap_0(B_0, A_0) \wedge B_0) \setminus A_0$.
With this definition of a trap, we can now define the assumption of player 0 about player 1
\myeq{
	\always \eventually (A_0 \vee \neg r_0)
	=
	\always \eventually (r_0 \rightarrow A_0).
}
This assumption extends the winning set of player 0, only if $\dbr{r_0} \neq \emptyset$.
Otherwise, the assumption is not useful, and we need to either:
\begin{enumerate}
	\item introduce a safety assumption that refers to additional variables, or
	\item define the specification as a nested game.
\end{enumerate}
In the following, we elaborate on these claims.

\subsection{The role of machine closure}

In \cref{sec:property-closure}, we conjoined the transition relations with a safety requirement to remain inside the cooperative winning set $C$.
In this section, we give an example, demonstrating that absence of closure can lead to a contract unrealizable by player 1, together with a contract that is trivially realizable by player 0.

In \cref{fig:example_lack_of_closure}, nodes from where player 0 (player 1) moves are denoted by disks (boxes).
Player 0 wants $\always \eventually G_{0, 0}$, and player 1 $\always \eventually G_{1, 0}$.
The goal $G_{1, 0}$ is not reachable from nodes $c, d$, so these nodes are not in the cooperative winning set $C$.
Suppose that we ignored this, and used the transition relation $\rho_1$, as given by the specifier.
Then, player 0 would think that player 1 can continue from node $b$ to node $c$, towards $d$.
In other words, player 0 will compute a larger attractor $\Attr_1(A_0)$ for player 1.
Taking into account the closure of the goal $\always \eventually G_{1, 0}$ by restricting $\rho_1$ to $\tilde{\rho}_1$, player 1 cannot take the transition $(b, c)$.

So, the property $\always \eventually ((a \vee b \vee e \vee f) \rightarrow A_0)$, assumed by player 1, is not realizable by player 0.
If player 0 knows about the goal $G_{1, 0}$ of player 1, then the game with this assumption becomes trivially realizable by player 0, from the nodes $a, b, e$.
Otherwise, the unrealizable contract will result in player 0 possibly choosing always the transition $(a, b)$ in vain, awaiting that player 0 will take $(b, c)$.
In both cases, the design fails.

To avoid trivial realizability (that corresponds to circularity of liveness assumptions), we need to introduce a nested game, where player 1 assumes that player 1 will eventually transition to $f$.
In this particular game, the nested game would have been avoided, had we conjoined with $\rho_C$, in order to ensure closure.
This demonstrates that lack of closure can manifest itself as superfluous liveness assumptions that, due to possible circularity, give rise to unnecessary game nesting (nesting will be defined later).
The pair $(\always (\rho_0 \wedge \rho_1), \always \eventually G_{1, 0})$ is not {\em machine closed} \cite{Abadi95toplas}, because $\cl{\always (\rho_0 \wedge \rho_1) \wedge \always \eventually G_{1, 0} } \neq \always (\rho_0 \wedge \rho_1)$, i.e., the property $\always \eventually G_{1, 0}$ introduces a safety constrain on $\always \rho_1$.

This superfluous nesting of games can result also due to variable hiding.
If some variables of player 1 are hidden from player 0, then it may be the case that player 1 can traverse $(b, c)$ only when its internal state allows so.

\begin{figure}
	\centering
	\includesvg[width=0.7\textwidth]{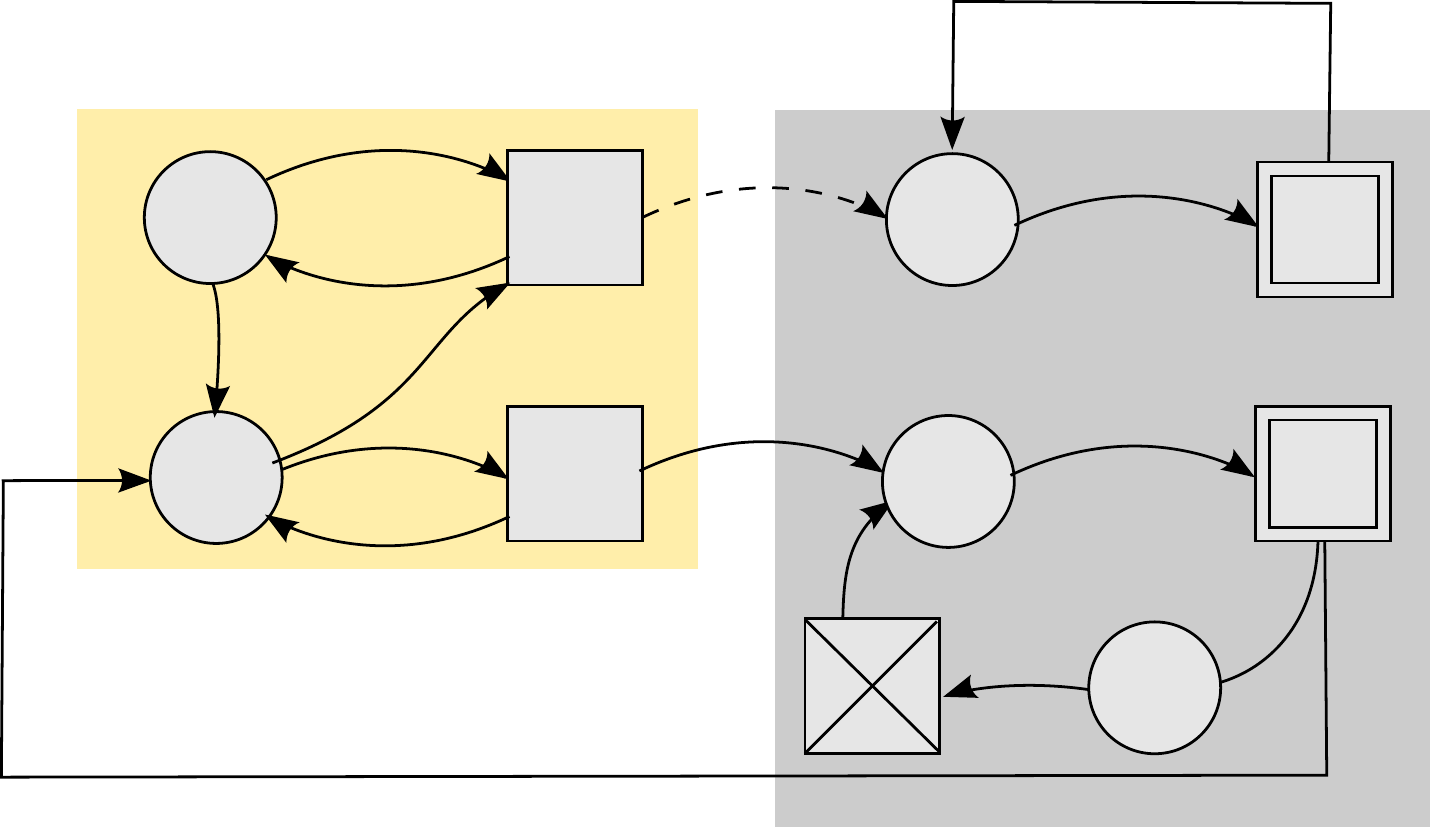}
	\caption{Example that demonstrates that lack of closure information can lead to the need for introducing additional nested games.}
	\label{fig:example_lack_of_closure}
\end{figure}

\subsection{Nonexistence of weak fairness assumptions over nodes}

Suppose that $\dbr{\Trap_0(B_0, A_0) } = \emptyset$.
This means that player 0 cannot keep player 1 in any subset of the attractor $\Attr_1(A_0)$.
We will use two counterexamples, to prove that, if we restrict the assumptions to recurrence properties in the GR(1) fragment, then it is {\em impossible} to cover the cooperative winning region.
Recall that in GR(1), a recurrence property includes a predicate over nodes, but not edges.

\begin{proposition}
\prolab{prop:transition-relations-hold}
\end{proposition}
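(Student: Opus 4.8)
I take the proposition to assert, from its label and its placement, that the transition relations are honoured along every play that satisfies all of the players' contracts: if a word $w$ satisfies $\bigwedge_{j \in I} \varphi_j$ with $\varphi_j = (\always \rho_C) \strictarrow \big(\always \rho_j \wedge \always \rho_C \wedge \mathrm{WF}_j\big)$, then $w \models \always \rho_C$ and $w \models \always \rho_j$ for every $j \in I$; equivalently, $w$ is a safe word of the game graph restricted to the cooperative winning set, $w \in \mathrm{Safe}(C)$. The plan is to bootstrap the safety half of each strict implication by an induction on logic time, and then to recognise $\bigwedge_{j \in I} \always \rho_j$ as exactly the body of $\mathrm{Safe}(C)$.

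First I would unfold $\strictarrow$ per \cref{def:strict-implication}: each $\varphi_j$ contains the conjunct $\always\big((\weakprevious \historically \rho_C) \rightarrow (\rho_j \wedge \rho_C)\big)$, discarding the recurrence component as irrelevant to safety. Then I argue by induction on $k \in \naturals$ that $\rho_C$ — and hence each $\rho_j$ — holds at every step $0, \dots, k$. For the base case $k = 0$, the antecedent $\weakprevious \historically \rho_C$ holds vacuously, since there is no previous instant, so $\rho_j \wedge \rho_C$ holds at $0$ for every $j$. For the inductive step, if $\rho_C$ held at $0, \dots, k - 1$, then $\historically \rho_C$ holds at $k - 1$, hence $\weakprevious \historically \rho_C$ holds at $k$, and so $\rho_j \wedge \rho_C$ holds at $k$ for every $j$. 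This gives $w \models \always \rho_C$ and $w \models \always \bigwedge_{j \in I} \rho_j$.

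It remains to read off $w \in \mathrm{Safe}(C)$. Since $\rho_C = C \wedge C'$ and $w$ is infinite, $w \models \always \rho_C$ forces $w[k] \in C$ for every $k$. At step $k$, with $j \triangleq w[k]\vert_i$ the active player, the conjunction $\bigwedge_{i \in I} \rho_i$ unfolds (using the definitions of $\rho_i$ and $\bar{\rho}_j$ from \cref{sec:games-in-logic}) to $\hat{\rho}_j(w[k], w[k + 1])$, $w[k]\vert_{\bar{x}_j} = w[k + 1]\vert_{\bar{x}_j}$, and $w[k + 1]\vert_i = j \oplus_n 1$, which is precisely the conjunction appearing inside the definition of $\mathrm{Safe}(C)$. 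Hence $w \in \mathrm{Safe}(C)$.

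The one genuinely delicate point is the bootstrapping. Superficially, each $\varphi_j$ only promises $\always \rho_C$ under the assumption $\always \rho_C$, so the collection of contracts looks circular; the argument works only because the implication is \emph{strict} — the obligation at step $k$ is conditioned on $\rho_C$ having held \emph{strictly before} $k$, and the $\weakprevious$ at time $0$ discharges the base case with no hypothesis at all. I would also take care to match the definition of $\rho_j$ against the individual clauses of $\mathrm{Safe}$ (the ``other players' variables freeze'' and ``$i$ increments'' parts), and to note that no initial-condition conjunct can interfere, since those were deliberately dropped from the $\varphi_j$. Beyond this, the argument is a routine unfolding of definitions, so I do not anticipate further obstacles.
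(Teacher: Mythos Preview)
You have guessed the wrong statement. The proposition in the paper is not about the $\rho_C$--based contracts from the end of \cref{sec:property-closure}; it is stated for the general GR(1) form of \cref{def:gr1-interleaving-repr}, where the safety conjunct of player~$j$ is $\always\big((\weakprevious\historically\bar{\rho}_j)\rightarrow\rho_j\big)$ with the asymmetric antecedent $\bar{\rho}_j$ (the \emph{other} players' actions), not a common $\rho_C$. The paper also phrases it contrapositively: assuming $w\not\models\always\rho_0\wedge\always\rho_1$, it concludes that for at least one player~$j$, \emph{every} GR(1) property of that form fails on~$w$. There is no $\mathrm{Safe}(C)$ conclusion; that part of your write-up is extraneous.

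That said, your proof technique is essentially the paper's, only run in the direct rather than the contrapositive direction. The paper picks the minimal time~$k$ at which $\rho_0\wedge\rho_1$ first fails, observes that minimality gives $\weakprevious\historically(\rho_0\wedge\rho_1)$ at~$k$, and hence the antecedent $\weakprevious\historically\bar{\rho}_j$ holds for whichever~$j$ has $\rho_j$ failing at~$k$; that player's safety conjunct is then violated. Your forward induction on~$k$ is the same argument: the $\weakprevious$ discharges the base case, and the inductive hypothesis supplies the antecedent at each step. If you replace $\rho_C$ by the pair $(\bar{\rho}_0,\bar{\rho}_1)=(\rho_1,\rho_0)$ in your induction, it goes through verbatim for the paper's actual statement --- the only adjustment is that at step~$k$ you use ``$\rho_0\wedge\rho_1$ held at $0,\dots,k-1$'' to feed \emph{both} antecedents simultaneously, exactly as you already do for $\rho_C$.
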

\assume{An infinite sequence $w \not\models \always \rho_0 \wedge \always \rho_1$.}
\prove{For at least one of the two players, for any property $\varphi$ of the form of \cref{def:gr1-interleaving-repr}, the sequence $w$ does not model $\varphi$.}
\begin{pf2proof}
\begin{proof}
\step{<1>1}{
	There exists a $k \in \naturals$ such that $w[k \dots k + 1] \not\models \rho_0 \wedge \rho_1$.}
\step{<1>2}{
	\pick\ the minimal $k \in \naturals$ such that $w[k \dots k + 1] \not\models \rho_0 \wedge \rho_1$. }
	\begin{proof}
	\pf\ By \stepref{<1>1}, the set of $k$ with this property is non-empty, countable, and bounded from below.
		So a minimal $k$ exists.
	\end{proof}
\step{<1>3}{
	\case{$w[k \dots k + 1] \not\models \rho_1$} }
	\begin{proof}
	\step{<2>1}{
		$\forall r \in 0 \dots k - 1.\; w[r \dots r + 1] \models \rho_0 \wedge \rho_1$}
		\begin{proof}
		\pf\ By \stepref{<1>2}, $k$ is the minimal non-negative integer with this property.
		\end{proof}
	\step{<2>2}{
		$w, k \models \weakprevious \historically \rho_0$}
		\begin{proof}
		\pf\ By \stepref{<2>1}.
		\end{proof}
	\step{<2>3}{
		$w, k \not\models \rho_1$}
		\begin{proof}
		\pf\ By \stepref{<1>3}.
		\end{proof}
	\qedstep
		\begin{proof}
		\pf\ By \stepref{<2>2} and \stepref{<2>3},
			\myeq[*]{
				w, k \not\models (\weakprevious \historically \rho_0) \rightarrow \rho_1
					\implies
				w \not\models \always\big(
					(\weakprevious \historically \rho_0)
						\rightarrow
					\rho_1 
				\big).
			}
		\end{proof}
	\end{proof}
\step{<1>4}{
	\case{$w[k \dots k + 1] \not\models \rho_0$} }
	\begin{proof}
	\pf\ Similar to \stepref{<1>3}.
	\end{proof}
\qedstep
	\begin{proof}
	\pf\ By \stepref{<1>2}, the cases \stepref{<1>3} and \stepref{<1>4} are exhaustive.
	\end{proof}
\end{proof}
\end{pf2proof}

\begin{proposition}
\prolab{prop:counterexample-weak-fairness}
\end{proposition}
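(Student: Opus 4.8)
The plan is to exhibit a concrete two-player turn-based synchronous game witnessing the obstruction, rather than to argue abstractly. First I would fix a small game graph with its node-ownership partition, a single recurrence goal $G_{0,0}$ for player 0 and $G_{1,0}$ for player 1, restrict the transition relations to the cooperative winning set $C$ as in \cref{sec:property-closure}, and then compute by hand the nested sequence $A_0 = \Attr_0(G_{0,0})$, $B_0 = \Attr_1(A_0)$, and $\Trap_0(B_0, A_0)$. The structural feature I would engineer is a node owned by player 1 inside $B_0 \setminus A_0$ from which player 1 may either progress toward $A_0$ or escape into a cycle that never meets $A_0$ but still visits $G_{1,0}$ infinitely often; this is exactly what forces $\dbr{\Trap_0(B_0, A_0)} = \emptyset$ while keeping $C \supsetneq A_0$, so that $C \setminus A_0$ is non-empty and must be covered by some assumption if player 0 is to win from all of $C$.

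Second, I would show that no recurrence assumption over nodes can cover $C \setminus A_0$ without introducing circularity. Let $\psi = \bigwedge_r \always \eventually H_r$ be an arbitrary candidate assumption of player 0 about player 1, each $H_r$ a node predicate on $C$. By \cref{prop:transition-relations-hold}, any run violating $\always\rho_0 \wedge \always\rho_1$ already falsifies one player's GR(1) specification, so it is harmless to restrict to runs obeying both transition relations and $\always\rho_C$. On those runs I would case-split on player 1's power to honour $\psi$: if player 1 has a strategy that satisfies every $\always\eventually H_r$ and its own goal while staying on the escape cycle forever, then $\always\eventually G_{0,0}$ is violated along that run, so $\psi$ fails to extend player 0's winning set to all of $C$; otherwise honouring $\psi$ forces player 1 off the escape cycle, and since $\dbr{\Trap_0(B_0,A_0)}=\emptyset$ means player 0 cannot confine player 1 there by a safety move, the only way $H_r$ can be hit is with player 0's cooperation — i.e. player 1 now needs a liveness assumption about player 0, making the contract pair circular and the resulting game trivially realizable, which is precisely what the construction must avoid.

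I expect the main obstacle to be the middle step: establishing, for an \emph{arbitrary} node predicate $H_r$, the clean dichotomy ``either $\always\eventually H_r$ is achievable by player 1 alone on the escape cycle, or reaching $H_r$ demands cooperation of player 0.'' This rests on the emptiness of $\Trap_0(B_0, A_0)$ together with determinacy of turn-based full-information games, and it forces the game graph to be designed so that the escape cycle is rich enough to realize any ``local'' recurrence a node assumption could demand, yet every node from which player 1 could be \emph{forced} toward $A_0$ already lies in $A_0$. Pinning down the transition structure so that this dichotomy holds with no gaps — and so that player 1's own goal $G_{1,0}$ genuinely sits on the escape cycle — is where the real care is needed; once the game is right, the two cases above close the argument.
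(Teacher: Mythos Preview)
Your proposal diverges from the paper in both scope and method. The proposition fixes the specific 8-node game of \cref{fig:counterexample_for_weak_fairness} and asserts that no \emph{single} node predicate $P$ makes both $\varphi_1$ realizable by player~1 and $\varphi_0$ realizable by player~0; there is no recurrence goal $G_{1,0}$ for player~1 in the statement, and no conjunction $\bigwedge_r\always\eventually H_r$---just one recurrence term. The ingredients you add (a liveness goal for player~1, multiple $H_r$, the escape cycle having to meet $G_{1,0}$) are extraneous to what must be proved here, and your framing of the second case as ``circularity'' / ``trivially realizable'' belongs to the surrounding motivation rather than to the formal content: what you actually need in that case is simply that $\varphi_1$ is unrealizable by player~1.

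On method, the paper does not attempt your abstract dichotomy. It performs an exhaustive finite case analysis on which of the two player-0-controlled sub-cycles $\{s_0,s_1\}$ and $\{s_2,s_3\}$ the set $\dbr{P}$ intersects: if $\dbr{P}$ misses either one, player~0 has an explicit strategy trapping the play in that sub-cycle, so $\always\eventually P$ fails and $\varphi_1$ is unrealizable; if $\dbr{P}$ meets both, player~1 plays $s_4\to s_1$ forever, confining the play to $\{s_0,\dots,s_4\}$, where every run necessarily hits one of the two sub-cycles infinitely often---so $\always\eventually P$ holds while $\always\eventually G$ fails, making $\varphi_0$ unrealizable. Your dichotomy (``player~1 can honour $\psi$ on the escape cycle alone, or reaching $H_r$ needs player~0's cooperation'') is morally this same split, phrased as determinacy of the $\always\eventually P$ sub-game restricted to the escape region, and could be made to work---but the obstacle you correctly flag (the escape region is not one cycle but contains two loops whose selection is controlled by player~0) is precisely what the paper resolves by exhibiting the concrete strategies rather than invoking an abstract trap/attractor argument. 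The paper's route is short and leaves no gap; yours would still owe the explicit sub-game determinacy argument on the given graph to be complete.
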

\assume{
	Define the transition relations $\rho_0, \rho_1$ by the game graph of \cref{fig:counterexample_for_weak_fairness}.
	Define the set of nodes $V \triangleq \{s_0, \dots, s_7\}$.
	Define the goal $G = \{s_6\}$ of player 0.
}
\prove{
	There does not exist a set $\dbr{P} \subseteq V$, such that:
	\begin{enumerate}
		\item the property
		\myeq{
			\varphi_1
			\triangleq
			(\always \rho_1)
				\strictarrow
			\big(
				\always \rho_0 \wedge
				\always \eventually P
			\big),
		}
		be realizable by player 1, and
		\item the property
		\myeq{
			\varphi_0
			\triangleq
			\big(
				\always \rho_0 \wedge
				\always \eventually P
			\big)
				\strictarrow
			\big(
				\always \rho_1 \wedge
				\always \eventually G
			\big)
		}
		be realizable by player 0.
	\end{enumerate}
}

\begin{figure}[t]
	\centering
	\includesvg[width=0.8\textwidth]{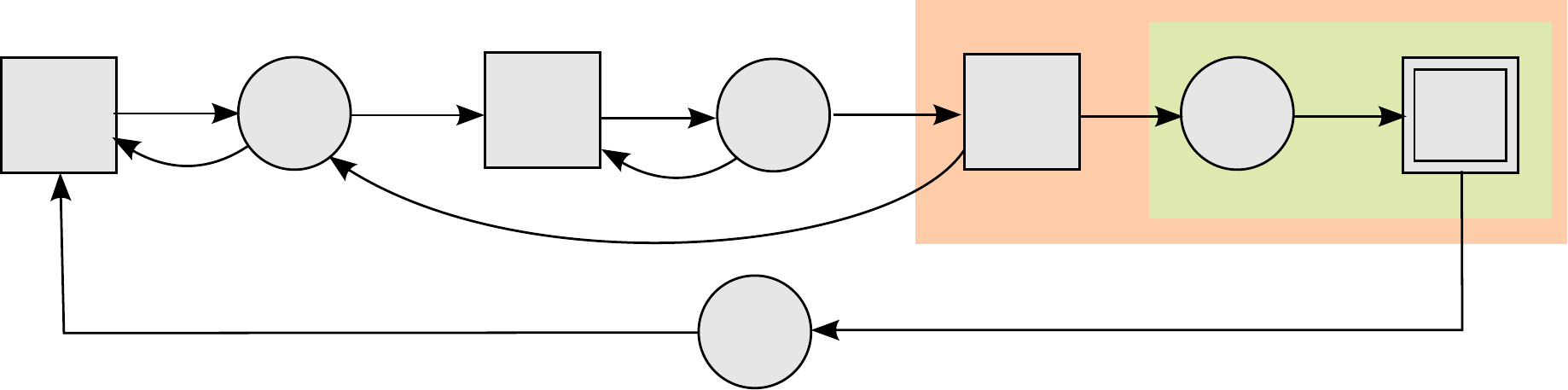}
	\caption{
		There does not exist a weak fairness assumption that suffices for realizability in this example.
		Player 0 (player 1) controls the play at disks (boxes).}
	\label{fig:counterexample_for_weak_fairness}
\end{figure}

\begin{pf2proof}
\begin{proof}
\step{<1>0}{
	$\always (\rho_0 \wedge \rho_1)$.}
	\begin{proof}
	\pf\ By \cref{prop:transition-relations-hold}, if $\always (\rho_0 \wedge \rho_1)$ is false for a play, then $\varphi_0$ or $\varphi_1$ is false.
	\end{proof}
\step{<1>1}{ \case{$\dbr{P} = \emptyset$} }
	\begin{proof}
		\pf\ $\always \eventually P = \always \eventually \bot$ is not realizable by player 1.
	\end{proof}
\step{<1>2}{ \case{$\dbr{P} \neq \emptyset$} }
	\begin{proof}
	\step{<2>1}{ \case{$\dbr{P} \cap \{s_0, s_1\} = \emptyset$} }
		\begin{proof}
		\step{<3>1}{
			$\dbr{P} \cap (V \setminus \{s_0, s_1\}) \neq \emptyset$}
			\begin{proof}
			\pf\ By \stepref{<1>2} and \stepref{<2>1}.
			\end{proof}
		\step{<3>2}{
			$\dbr{P} \cap \{s_2, \dots, s_7\} \neq \emptyset$}
			\begin{proof}
			\pf\ By \stepref{<3>1} and definition of node set $V$.
			\end{proof}
		\step{<3>3}{
			$\always \eventually P$ not realizable by player 1.}
			\begin{proof}
			\step{<4>1}{
				\define{
					Player 0 strategy
					\myeq[*]{
						f
						\triangleq
						\big(s_3 \rightarrow (s_3 \wedge s_4')\big) \wedge
						\big(s_1 \rightarrow (s_1 \wedge s_0')\big)
					}
				}
			}
			\step{<4>2}{
				If player 0 uses the strategy $f$ of \stepref{<4>1}, then all plays violate $\always \eventually P$.}
				\begin{proof}
				\step{<5>1}{
					From the nodes $s_2, \dots, s_7$, the play goes to node $s_1$.}
				\step{<5>2}{
					From node $s_1$, the play is $s_1 (s_0 s_1)^\omega$.}
				\qedstep
					\begin{proof}
					\pf\ By \stepref{<5>2}, any play reaches, and then remains forever in, the set $\{s_0, s_1\}$.
					By \stepref{<2>1}, this play does not intersect $\dbr{P}$, so the play does not satisfy $\always \eventually P$.
					\end{proof}
				\end{proof}
			\qedstep
				\begin{proof}
				\pf\ By \stepref{<4>2}, player 1 cannot realize $\always \eventually P$.
				\end{proof}
			\end{proof}
		\qedstep
			\begin{proof}
			\pf\ By \stepref{<3>3}, the consequent of $\varphi_1$ is false.
			\end{proof}
		\end{proof}
	\step{<2>2}{
		\case{$\dbr{P} \cap \{s_0, s_1\} \neq \emptyset$} }
		\begin{proof}
		\step{<3>1}{
			\case{$\dbr{P} \cap \{s_2, s_3\} = \emptyset$} }
			\begin{proof}
				\step{<4>1}{
					\define{
						Player 0 strategy
						\myeq[*]{
							f
							\triangleq
							\big(s_1 \rightarrow (s_1 \wedge s_2')\big) \wedge
							\big(s_3 \rightarrow (s_3 \wedge s_2')\big)
						}
					}
				}
				\step{<4>2}{
					If player 0 uses the strategy $f$ of \stepref{<4>1}, then all plays violate $\always \eventually P$.}
					\begin{proof}
						\step{<5>1}{From the nodes $s_0, s_4, \dots, s_7$, the play goes to node $s_1$.}
						\step{<5>2}{From node $s_1$, the play is $s_1 (s_2 s_3)^\omega$.}
						\qedstep
							\begin{proof}
							\pf\ By \stepref{<5>2}, the play reaches, and then remains forever in, the set $\{s_2, s_3\}$.
							By \stepref{<3>1}, this play does not intersect $\dbr{P}$, so the play does not satisfy $\always \eventually P$.
							\end{proof}
					\end{proof}
				\qedstep
					\begin{proof}
					\pf\ By \stepref{<4>2}, $\varphi_1$ is false.
					\end{proof}
			\end{proof}
		\step{<3>2}{
			\case{$\dbr{P} \cap \{s_2, s_3\} \neq \emptyset$} }
			\begin{proof}
				\step{<4>1}{
					\define{
						Player 1 strategy
						$
							f
							\triangleq
							s_4 \rightarrow (s_4 \wedge s_1')
						$.
					}
				}
				\step{<4>2}{
					If player 1 uses strategy $f$ of \stepref{<4>1}, and the play is in the set $\{s_0, \dots, s_4\}$, then the play remains in $\{s_0, \dots, s_4\}$ in the next time step.}
					\begin{proof}
					\pf\ The only edge that exits the set $\{s_0, \dots, s_4\}$, and satisfies both $\rho_0$ and $\rho_1$, is $s_4 \wedge s_5'$.
					This player 1 edge is not in the strategy $f$ of \stepref{<4>1}.
					\end{proof}
				\step{<4>3}{
					If a play starts in the set $\{s_5, s_6, s_7\}$, then it reaches the set $\{s_0, \dots, s_4\}$ in a finite number of steps.}
					\begin{proof}
					\pf\ By the definition of $\rho_0, \rho_1$ and \stepref{<1>0}.
					\end{proof}
				\step{<4>4}{
					If player 1 uses strategy $f$ of \stepref{<4>1}, then any play enters the set $\{s_0, \dots, s_4\}$, and then remains in it.}
					\begin{proof}
					\pf\ By \stepref{<4>2} and \stepref{<4>3}.
					\end{proof}
				\step{<4>5}{
					Any play where player 1 uses the strategy $f$ of \stepref{<4>1} satisfies $\always \eventually P.$}
					\begin{proof}
					\step{<5>2}{
						Any play either reaches, and remains forever in, the set $\{s_2, s_3\}$, or it visits node $s_1$.}
						\begin{proof}
						\step{<6>1}{
							It is possible to remain forever in $\{s_2, s_3\}$.}
						\step{<6>2}{
							If the play exits $\{s_2, s_3\}$, then it visits $s_1$.}
							\begin{proof}
							\pf\ The only edge that exits $\{s_2, s_3\}$ is $s_3 \wedge s_4'$.
							By \stepref{<4>1}, the next edge is $s_4 \wedge s_1'$.
							\end{proof}
						\qedstep
							\begin{proof}
							\pf\ By \stepref{<6>1} and \stepref{<6>2}.
							\end{proof}
						\end{proof}
					\step{<5>3}{
						If the play visits node $s_1$, then it either visits both $s_0$ and $s_1$, or both $s_2$ and $s_3$.}
						\begin{proof}
						\pf\ Each edge outgoing from node $s_1$ leads to either $s_0$ and $s_1$, or to $s_2$ and $s_3$.
						\end{proof}
					\step{<5>4}{
						Any play visits the set $\dbr{P}$ infinitely many times.}
						\begin{proof}
						\pf\ By \stepref{<5>2}, \stepref{<5>3}, the play either visits both $s_2$ and $s_3$ infinitely many times, or it reaches $s_1$ infinitely many times, so also either $s_2$ and $s_3$ infinitely many times, or $s_0$ and $s_1$ infinitely many times.
						By \stepref{<2>2} and \stepref{<3>2}, the play visits the set $\dbr{P}$ infinitely many times.
						\end{proof}
					\qedstep
						\begin{proof}
						\pf\ By \stepref{<5>4}, the play satisfies $\always \eventually P$.
						\end{proof}
					\end{proof}
				\qedstep
					\begin{proof}
					\pf\ By \stepref{<4>4} and \stepref{<4>5}, any play where player 1 uses the strategy $f$ satisfies $\always \eventually P$ and violates $\always \eventually G$.
					So, $\varphi_0$ is not true.
					\end{proof}
			\end{proof}
		\qedstep
			\begin{proof}
			\pf\ By \stepref{<3>1} and \stepref{<3>2}.
			\end{proof}
		\end{proof}
	\qedstep
		\begin{proof}
		\pf\ By \stepref{<2>1} and \stepref{<2>2}.
		\end{proof}
	\end{proof}
\qedstep
	\begin{proof}
	\pf\ By \stepref{<1>1} and \stepref{<1>2}.
	\end{proof}
\end{proof}
\end{pf2proof}

We can make a number of observations.
Firstly, there {\em does} exist a weak fairness assumption {\em outside} of the GR(1) fragment, such, that the game of \cref{fig:counterexample_for_weak_fairness} becomes non-trivially realizable.
This weak fairness property is in an extension of the GR(1) fragment with action predicates in recurrence properties \cite{Raman13phd}.

In particular, we have to tell player 0 that it is unfair to, forever, hide in the set $\{s_2, s_3\}$, i.e., $\eventually \always \neg (s_3 \wedge s_2')$.
If we add this property both as an assumption of player 1, and as a guarantee by player 0, then trivial realizability persists, because this is a liveness property (ignoring, for a moment, that this results in a Rabin(1) game).
Thus, it should not be added as a guarantee for player 0.

But we can “subtract” this property from the assumption of player 0.
Consider the desired assume-guarantee pair for player 1
\myeq{
	\eventually \always \neg (s_3 \wedge s_2')
	\rightarrow
	\always \eventually \big( (s_1 \vee s_2 \vee s_3 \vee s_4) \rightarrow s_5\big)
}
Then, merge the antecedent (persistence) and consequent (recurrence) into a single recurrence property
\myeq{
	\always \eventually \big(
		(s_3 \wedge s_2') \vee
		s_5 \vee
		\neg (s_1 \vee s_2 \vee s_3 \vee s_4)
	\big).
}
This property is realizable by player 1, but {\em not} in the GR(1) fragment, because $(s_3 \wedge s_2')$ is an edge.
It is in an extension of GR(1) with edges in liveness properties \cite{Raman13phd}.

Moreover, the above property can be expressed in GR(1), by shifting the above transition formula one step into the past, as $\always \eventually \previous (\dots)$.
This introduces a {\em history variable}, for remembering the past, and a safety property about this variable's update behavior.
Pnueli observes in \cite{Pnueli85lmcs} the equivalence of auxiliary variables, with allowing the past.
We observe that describing in GR(1) this weak fairness property, which involves a transition relation, introduces a safety property, and increases the number of variables.

In general, a weak fairness assumption over edges (of both players) in the game graph can be computed by finding a trap set that is sufficiently large, to prevent player 1 from satisfying the assumption, by going away from the goal desired by player 0 (e.g., the edge $s_4 \wedge s_1'$ in \cref{fig:counterexample_for_weak_fairness}).
Such a set can lead to trivial realizability.
In order to prevent trivial realizability, edges of player 0 that lead away from the goal can be subtracted from the assumption, as we did above with the edge $s_3 \wedge s_2'$.
These edges can be computed by considering consecutive iterates of a reachability computation in the cooperative winning set.

Here, we decide to use the GR(1) fragment, with recurrence properties over nodes, because recurrence assumptions that refer to edges of player 0 need to include all backward leading edges inside the trap set.
Therefore, this type of assumptions explicitly refers to the transition relation, over a set of nodes.
As a result, it leads to more complex and detailed formulae, which are less amenable to simplification, and are less suitable for an extension to cases with hidden variables.

Note that in a non-interleaving representation, both primed and unprimed variables are required to represent nodes from where player 1 moves.
In more detail, player 1 moves from nodes of the form $(x_0', x_1, i)$.
Even though such a representation involves primed variables, in the game graph, these are still nodes, not edges.
Therefore, in a non-interleaving representation, the propositions have the same semantics, but with different syntax.

A more direct approach is to introduce safety, by requiring that $\always \big(s_4 \rightarrow (s_4 \wedge s_5')\big)$.
This resolves the non-determinism, by fixing a choice (undesirable).
However, such a fixed safety assumption may not exist, as proved by the following.

\begin{figure}[t]
	\centering
	\includesvg[width=0.8\textwidth]{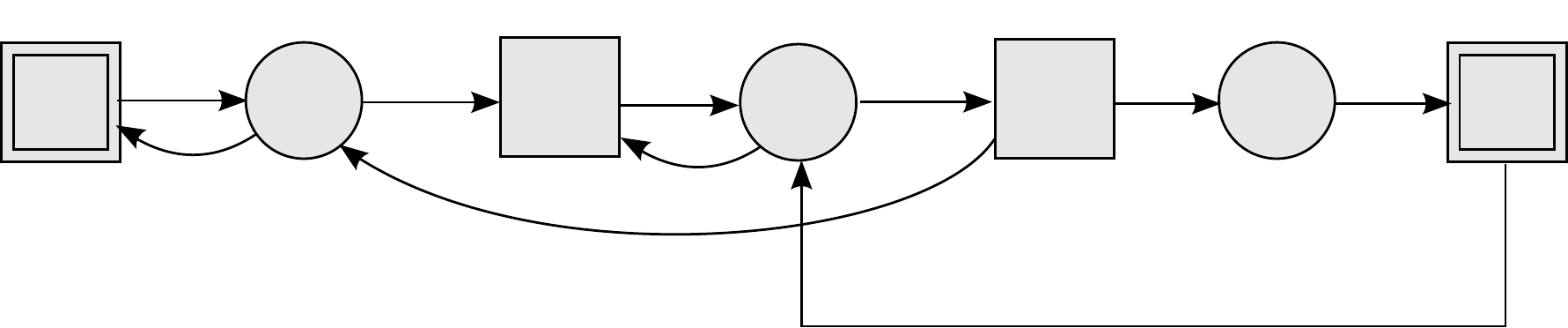}
	\caption{
		There does not exist a realizable GR(1) property that suffices as an assumption in this example, as proved in \cref{lem:no-gr1-assumption}.
		Player 0 (player 1) controls the play at disks (boxes).}
	\label{fig:counterexample_for_safety_assumption}
\end{figure}

\begin{proposition}[Nonexistence of safety]
\prolab{prop:counterexample-for-safety-assumption}
\end{proposition}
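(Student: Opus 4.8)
The plan is to argue by contradiction, following the structured format used for \cref{prop:counterexample-weak-fairness}: assume a realizable GR(1) safety assumption with the two stated properties exists, and derive that one of $\varphi_0,\varphi_1$ is in fact unrealizable. As a preliminary reduction I would invoke \cref{prop:transition-relations-hold} to restrict attention to plays satisfying $\always(\rho_0\wedge\rho_1)$; any play violating this already falsifies $\varphi_0$ or $\varphi_1$, so it may be discarded, and what remains is a finite combinatorial analysis of the game graph of \cref{fig:counterexample_for_safety_assumption}.

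I would then read the candidate safety assumption $\always\rho_e$ as the choice of a permitted sub-region $R$ of the game graph (the nodes, and the outgoing edges, that $\rho_e$ leaves available) and run a two-horned dilemma on $R$. \textbf{Horn 1.} For player~0 to force $\always\eventually G$ under the assumption, $R$ must forbid a ``divergent'' move of player~1 --- the analogue of the edge $s_4\wedge s_1'$ in \cref{fig:counterexample_for_weak_fairness}. But \cref{fig:counterexample_for_safety_assumption} is built with a fork in which the node carrying that divergent move has, as its only $\rho_0\wedge\rho_1$-legal successors, either the divergent edge itself or edges leading into a region from which player~0 cannot reach $G$; consequently, once the divergent edge is cut, there is a legal player-1 play driven to a node all of whose legal successors leave $R$, so $\always\rho_e$ is violated along a play player~1 cannot avoid. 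Hence player~1 cannot realize the assumption unconditionally --- condition~(1) fails, and $\varphi_0$ would be only vacuously (``trivially'') realizable. \textbf{Horn 2.} If instead $R$ keeps the divergent move available, player~1 has a strategy that stays inside $R$ forever --- keeping the antecedent $\always\rho_0\wedge\always\rho_e$ of $\varphi_0$ true --- while visiting $G$ only finitely often, so the consequent $\always\eventually G$ fails and condition~(2) fails. The two horns are exhaustive because, in this graph, $G$ is player-0-forceable only when the divergent move is forbidden. Finally, if the admissible assumption is allowed to carry a recurrence component $\always\eventually P$ over nodes in addition to the safety part, I would dispatch that component by appealing to \cref{prop:counterexample-weak-fairness} applied to the same graph (or an embedded copy of its gadget), so the conclusion rules out every realizable GR(1) assumption over nodes.

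The main obstacle is making the dilemma airtight against the concrete graph: one must exhibit, for \emph{every} region $R$ that cuts the divergent move, an explicit forcing sequence for player~1 ending at a node whose $\rho_0\wedge\rho_1$-legal successors all lie outside $R$, and show that no ``intermediate'' choice of $R$ evades both horns. I expect this to come down to a small reachability and trap-set computation on \cref{fig:counterexample_for_safety_assumption}, using monotonicity --- enlarging $R$ can only help player~1 in Horn~2 and only hurt it in Horn~1 --- to collapse the analysis to the extremal regions rather than all $2^{|E|}$ edge subsets, together with careful bookkeeping of which nodes are controlled by player~0 and which by player~1, in the spirit of the nested case split in the proof of \cref{prop:counterexample-weak-fairness}.
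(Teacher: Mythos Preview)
Your proposal targets the wrong statement. \cref{prop:counterexample-for-safety-assumption} is \emph{not} the safety analogue of \cref{prop:counterexample-weak-fairness}: it concerns a different game graph (\cref{fig:counterexample_for_safety_assumption}, with nodes $s_0,\dots,s_6$ and \emph{two} recurrence goals $G_1=\{s_6\}$ and $G_2=\{s_0\}$), and the claim is about \emph{cooperative satisfiability}, not mutual realizability. Concretely, what has to be shown is that for every \emph{proper} subset $\rho\subsetneq\rho_1$ of player~1's edges, the single formula
\[
\varphi\;=\;\always(\rho_0\wedge\rho)\wedge\always\eventually G_1\wedge\always\eventually G_2
\]
has no model at all---no play, however the two players cooperate. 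The paper stresses this distinction explicitly right after the proof: satisfiability is the stronger condition, chosen here so that it survives later conjunction with a recurrence assumption when assembling \cref{lem:no-gr1-assumption}.

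Because of this, your two-horned dilemma does not engage the proposition. There is no pair $(\varphi_0,\varphi_1)$, no question of who realizes what, and no single goal $G$; your Horn~2 (``player~1 keeps the antecedent true while avoiding $G$'') is simply not a failure mode for a satisfiability claim. The paper's proof is much more direct: player~1 has exactly five edges in \cref{fig:counterexample_for_safety_assumption}, and since $\rho\subsetneq\rho_1$ at least one is absent. A five-way case split shows, for each removed edge, that either some node becomes a dead end (so no infinite $\always(\rho_0\wedge\rho)$ play exists through it), or one of $G_1,G_2$ becomes unreachable from the other, so no play can satisfy both $\always\eventually G_1$ and $\always\eventually G_2$. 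No appeal to \cref{prop:transition-relations-hold} or to \cref{prop:counterexample-weak-fairness} is needed; the recurrence part you mention at the end is handled separately in \cref{prop:nonexistence-recurrence}, and only then combined into \cref{lem:no-gr1-assumption}.
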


\assume{
	Define the transition relations $\rho_0, \rho_1$ by the game graph of \cref{fig:counterexample_for_safety_assumption}.
}
\prove{
	There does not exist a set $\dbr{\rho} \subsetneq \dbr{\rho_1}$, such that player 1 chooses edges that satisfy $\rho$, and
	\myeq{
		\varphi
		\triangleq
		\always(\rho_0 \wedge \rho) \wedge
		\always \eventually G_1 \wedge
		\always \eventually G_2
	}
	is satisfiable (cooperatively by player 0 and player 1).
}

\begin{pf2proof}
\begin{proof}
\step{<1>0}{
	$\always (\rho_0 \wedge \rho)$}
	\begin{proof}
	\pf\ If $\always (\rho_0 \wedge \rho_1)$ is false for a play, then $\varphi$ is false.
	\end{proof}
\step{<1>1}{
	\case{$\rho$ does not include the edge $s_0 \wedge s_1'$.} }
	\begin{proof}
	\step{<2>1}{
		No infinite play visits $\dbr{G_2}$.}
		\begin{proof}
		\pf\ By \stepref{<1>1}, if a play visits node $s_0$, then there is no next node.
		\end{proof}
	\qedstep
		\begin{proof}
		\pf\ By \stepref{<2>1}, no play satisfies the property $\always \eventually G_2$.
		\end{proof}
	\end{proof}
\step{<1>2}{
	\case{$\rho$ does not include the edge $s_2 \wedge s_3'$.} }
	\begin{proof}
	\step{<2>1}{
		If an infinite play visits $\dbr{G_2}$, then it does not satisfy $\always \eventually G_1$.}
		\begin{proof}
		\pf\ By \stepref{<1>2}, no path exists from the set $\dbr{G_2} = \{s_0\}$, to the set $\dbr{G_1} = \{s_6\}$.
		\end{proof}
	\step{<2>2}{
		\case{A play satisfies $\always \eventually G_2$.} }
		\begin{proof}
		\pf\ By \stepref{<2>2}, the play visits $\dbr{G_2}$, so by \stepref{<2>1}, the play violates $\always \eventually G_1$.
		\end{proof}
	\step{<2>3}{
		\case{A play violates $\always \eventually G_2$.} }
		\begin{proof}
		\pf\ By definition of $\varphi$.
		\end{proof}
	\qedstep
		\begin{proof}
		\pf\ By \stepref{<2>2} and \stepref{<2>3}, with such a $\rho$, no play satisfies $\varphi$.
		\end{proof}
	\end{proof}
\step{<1>3}{
	\case{$\rho$ does not include the edge $s_4 \wedge s_1'$.} }
	\begin{proof}
	\step{<2>1}{
		If an infinite play visits $\dbr{G_1}$, then it does not satisfy $\always \eventually G_2$.}
		\begin{proof}
		\pf\ By \stepref{<1>3}, there is no path from $\dbr{G_1}$ to $\dbr{G_2}$.
		\end{proof}
	\step{<2>2}{
		\case{A play satisfies $\always \eventually G_1$.} }
		\begin{proof}
		\pf\ By \stepref{<2>2}, the play visits $\dbr{G_1}$, so by \stepref{<2>1}, the play violates $\always \eventually G_2$.
		\end{proof}
	\step{<2>3}{
		\case{A play violates $\always \eventually G_1$.} }
		\begin{proof}
		\pf\ By definition of $\varphi$.
		\end{proof}
	\qedstep
		\begin{proof}
		\pf\ By \stepref{<2>2} and \stepref{<2>3}, with such a $\rho$, no play satisfies $\varphi$.
		\end{proof}
	\end{proof}
\step{<1>4}{
	\case{$\rho$ does not include the edge $s_4 \wedge s_5'$.} }
	\begin{proof}
	\step{<2>1}{
		If a play visits node $s_6$, then it does not revisit $s_6$.}
		\begin{proof}
		\pf\ By \stepref{<1>4}, there does not exist a path from node $s_6$ to node $s_6$.
		\end{proof}
	\qedstep
		\begin{proof}
		\pf\ By \stepref{<2>1}, no play satisfies $\always \eventually G_1$.
		By definition of $\varphi$, no play satisfies $\varphi$.
		\end{proof}
	\end{proof}
\step{<1>5}{
	\case{$\rho$ does not include the edge $s_6 \wedge s_3'$.} }
	\begin{proof}
	\pf\ Similar to \stepref{<1>1}, but for the set $\dbr{G_1}$.
	\end{proof}
\qedstep
	\begin{proof}
	\step{<2>1}{
		$\rho \subseteq \rho_1$ and $\rho \neq \rho_1$.}
		\begin{proof}
		\pf\ By hypothesis.
		\end{proof}
	\step{<2>2}{
		The transition relation $\rho$ has at least one fewer edge than $\rho_1$.}
		\begin{proof}
		\pf\ By \stepref{<2>1}.
		\end{proof}
	\step{<2>3}{
		The cases \stepref{<1>1}--\stepref{<1>5} are exhaustive.}
		\begin{proof}
		\pf\ By \stepref{<2>2}, the definition, by hypothesis, of $\rho_1$ with 5 edges, and the case statements \stepref{<1>1}--\stepref{<1>5} for those 5 edges.
		\end{proof}
	\qedstep
		\begin{proof}
		\pf\ By \stepref{<2>3}.
		\end{proof}
	\end{proof}
\end{proof}
\end{pf2proof}

In \cref{prop:counterexample-for-safety-assumption}, we proved lack of satisfiability, not lack of mutual realizability.
This condition (for safety here) is stronger than in \cref{prop:counterexample-weak-fairness} (for recurrence there).
The reason is that we will need to combine the result with \cref{prop:nonexistence-recurrence}.
If a property (safety) is not realizable by player 1, then conjoining with another property (recurrence) restricts it further, so it remains unrealizable by player 1.
In general, if a property does not suffice as an assumption for player 0, it is not true that restricting it will yield a property unrealizable by player 0.
However, if a property $P \wedge Q$ is unsatisfiable cooperatively by the two players, then further restriction yields an unsatisfiable property.

Suppose that the property $P$ is realizable by player 1, and the property $P \strictarrow Q$ by player 0.
Use a strategy for each player to control all the variables.
The composite strategy satisfies $P \wedge (P \strictarrow Q)$, so also $P \wedge Q$.
Therefore, the property $P \wedge Q$ is satisfiable cooperatively, a contradiction.
So, the restriction of $P$ to $\hat{P}$ yields assume-guarantee pairs $\hat{P}$ for player 1, and $\hat{P} \strictarrow Q$, of which at least one is not realizable.

We turn now to the nonexistence of a recurrence assumption for the conjoined goals $(\always \eventually G_1) \wedge (\always \eventually G_2)$.
For each of the goals $\always \eventually G_1$ and $\always \eventually G_2$, there exists a recurrence assumption $\always \eventually P$, such that both the formula
$
	\big(
		\always \rho_0
	\big)
		\strictarrow
	\big(
		\always \rho_1 \wedge
		\always \eventually P
	\big)
$
is realizable by player 1, and the formula
$
	\big(
		\always \rho_1 \wedge
		\always \eventually P
	\big)
		\strictarrow
	\big(
		\always \rho_0 \wedge
		\always \eventually G_i
	\big)
$
is realizable by player 2.
In particular,
\begin{itemize}
	\item $\always \eventually s_3$ for $\always \eventually G_2$
	\item $\always \eventually (s_0 \vee s_2)$ for $\always \eventually G_1$.
\end{itemize}
The mutual realizability for these assumptions has been confirmed with a GR(1) synthesizer.

\begin{proposition}[Nonexistence of recurrence]
\prolab{prop:nonexistence-recurrence}
\end{proposition}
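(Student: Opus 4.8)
The plan is to follow the template of the proof of \cref{prop:counterexample-weak-fairness}: argue by an exhaustive case analysis on the candidate node set $\dbr{P}\subseteq V$, and in each case exhibit a finite-memory strategy for one of the two players that refutes one of the two realizability requirements. As a first step I would invoke \cref{prop:transition-relations-hold} to restrict attention to plays satisfying $\always(\rho_0\wedge\rho_1)$, since any play violating this already falsifies $\varphi_0$ or $\varphi_1$. The degenerate case $\dbr{P}=\emptyset$ is immediate, as $\always\eventually P=\always\eventually\bot$ cannot be guaranteed by player~1, so requirement~1 fails. Throughout, the two refutation patterns are: (a) player~0 has a memoryless, $\rho_0$-legal strategy that drives every play into a region disjoint from $\dbr{P}$, so $\always\eventually P$ is violated and requirement~1 fails; or (b) player~1 has a memoryless, $\rho_1$-legal strategy keeping every play inside a cycle that meets $\dbr{P}$ infinitely often but omits $\dbr{G_1}$ (respectively $\dbr{G_2}$), so the assumption of $\varphi_0$ holds while the guarantee $\always\eventually G_1\wedge\always\eventually G_2$ fails and requirement~2 fails.

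For $\dbr{P}\neq\emptyset$ I would split according to which of a small number of critical vertex sets of \cref{fig:counterexample_for_safety_assumption} the set $\dbr{P}$ meets. The graph is laid out so that player~1 can confine the play to one of several goal-omitting cycles (one hitting $\dbr{G_1}$ but not $\dbr{G_2}$, one hitting $\dbr{G_2}$ but not $\dbr{G_1}$, one hitting neither), while player~0 can confine the play to certain regions of its own; matching each placement of $\dbr{P}$ to the appropriate one of these, and then tracing the resulting ultimately periodic play of the form ``$s_i\rightarrow(s_i\wedge s_j')$'' as in \cref{prop:counterexample-weak-fairness,prop:counterexample-for-safety-assumption}, closes every subcase. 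A possible shortcut: weakening the guarantee of $\varphi_0$ shows that requirements~1--2 force $\dbr{P}$ to be a valid recurrence assumption for $\always\eventually G_1$ and for $\always\eventually G_2$ \emph{separately}, so it would suffice to prove that no single node set meets both of these necessary conditions (exploiting that, as the text records and a GR(1) synthesizer confirms, the individual assumptions $\always\eventually s_3$ and $\always\eventually(s_0\vee s_2)$ are the representative witnesses, and these cannot be reconciled).

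The main obstacle is organizing the case partition so that it is genuinely exhaustive over all subsets $\dbr{P}\subseteq V$ while keeping each cell matched to one clean refuting strategy; this is exactly where the specific topology of \cref{fig:counterexample_for_safety_assumption} is used essentially, and where bookkeeping can proliferate. Conceptually, the reason no $\dbr{P}$ can work is the circular dependency flagged in \cref{sec:proposed-approach}: a single recurrence set over \emph{nodes} cannot simultaneously compel player~1 to cooperate toward $G_1$ infinitely often and toward $G_2$ infinitely often, even though each goal in isolation does admit such an assumption. Once the partition is fixed, the remainder is the routine play-tracing already exercised in the preceding propositions.
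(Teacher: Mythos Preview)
Your main approach—exhaustive case analysis on $\dbr{P}$ with a memoryless refuting strategy for one player in each cell—is exactly what the paper does. The paper's partition proceeds by whether $\dbr{P}$ meets $\{s_2,s_3\}$, then $\{s_0,s_1\}$, then $\{s_3,\dots,s_6\}$, and in the final subcase it additionally splits on the \emph{initial node}, a wrinkle your template glosses over: when $\dbr{P}\cap\{s_0,s_1\}=\emptyset$, $\dbr{P}\cap\{s_2,s_3\}\neq\emptyset$, and $\dbr{P}\cap\{s_3,\dots,s_6\}\neq\emptyset$, player~1's confining strategy $s_4\rightarrow(s_4\wedge s_5')$ refutes $\varphi_0$ only from initial nodes outside $\{s_0,s_1\}$; from initial nodes in $\{s_0,s_1\}$ it is instead player~0 (via $s_1\rightarrow(s_1\wedge s_0')$) who refutes $\varphi_1$. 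So which requirement fails depends on the initial node, not only on $\dbr{P}$, and your pattern~(b) as written (``keeping every play inside a cycle'') will not close this cell uniformly.

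Your proposed shortcut does not work as stated. Knowing that $\always\eventually s_3$ and $\always\eventually(s_0\vee s_2)$ are \emph{some} valid recurrence assumptions for $G_2$ and $G_1$ respectively says nothing about whether the full sets of valid assumptions for the two goals intersect; to exploit the reduction you would have to characterize \emph{all} $\dbr{P}$ that work for each goal separately, which is no lighter than the direct case analysis.
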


\assume{
	Define $\rho_i$ the transition relation of player $i$ by the game of \cref{fig:counterexample_for_safety_assumption}.
	Define the set of nodes $V \triangleq \{s_0, \dots, s_6\}$.
}
\prove{
	For all sets of nodes $P \subseteq V$, for any initial node, either
	\begin{itemize}
		\item the property
		\myeq{
			\varphi_1
			\triangleq
			\big(
				\always \rho_1
			\big)
				\strictarrow
			\big(
				\always \rho_0 \wedge
				\always \eventually P
			\big),
		}
		is not realizable by player 1,
		or
		\item the property
		\myeq{
			\varphi_0
			\triangleq
			\big(
				\always \rho_0 \wedge
				\always \eventually P
			\big)
				\strictarrow
			\big(
				\always \rho_1 \wedge
				\always \eventually G_1 \wedge
				\always \eventually G_2
			\big),
		}
		is not realizable by player 0.
	\end{itemize}
}

\begin{pf2proof}
\begin{proof}
\step{<1>0}{
	$\always (\rho_0 \wedge \rho_1)$}
	\begin{proof}
	\pf\ By \cref{prop:transition-relations-hold}, if $\always (\rho_0 \wedge \rho_1)$ is false for a play, then $\varphi_0$ or $\varphi_1$ is false.
	\end{proof}
\step{<1>1}{
	\case{$\dbr{P} = \emptyset$} }
	\begin{proof}
	\pf\ By \stepref{<1>0} and \stepref{<1>1}, $\varphi_1$ is false.
	\end{proof}
\step{<1>2}{
	\case{$\dbr{P} \neq \emptyset$.} }
	\begin{proof}
	\step{<2>1}{
		\case{$\dbr{P} \cap \{s_2, s_3\} = \emptyset$.} }
		\begin{proof}
		\step{<3>1}{
			\define{Player 0 strategy
				\myeq[*]{
					f
					\triangleq
					\big(s_1 \rightarrow (s_1 \wedge s_2')\big) \wedge
					\big(s_3 \rightarrow (s_3 \wedge s_2')\big).
				}
			}
		}
		\step{<3>2}{
			If player 0 uses strategy $f$, then no play satisfies $\always \eventually P$.}
			\begin{proof}
			\step{<4>1}{
				From the set $\{s_0, s_2, s_4, s_5, s_6\}$, the play visits either node $s_1$, or node $s_3$.}
			\step{<4>2}{
				If player 0 uses strategy $f$, then from the nodes in $\{s_1, s_3\}$, the play is $(s_2 s_3)^\omega$.}
				\begin{proof}
				\pf\ By the strategy $f$ of \stepref{<3>1}.
				\end{proof}
			\step{<4>3}{
				Any play is of the form $s_i^\ast (s_2 s_3)^\omega$.}
				\begin{proof}
				\pf\ By \stepref{<4>1}, \stepref{<4>2}, and that these cases cover $V$.
				\end{proof}
			\qedstep
				\begin{proof}
				\pf\ By \stepref{<4>3} and \stepref{<2>1}.
				\end{proof}
			\end{proof}
		\qedstep
			\begin{proof}
			\pf\ By \stepref{<3>2}, if player 0 uses the strategy $f$ of \stepref{<3>1}, then player 1 cannot realize property $\varphi_1$.
			\end{proof}
		\end{proof}
	\step{<2>2}{
		\case{$\dbr{P} \cap \{s_2, s_3\} \neq \emptyset$.} }
		\begin{proof}
		\step{<3>1}{
			\case{$\dbr{P} \cap \{s_0, s_1\} \neq \emptyset$.} }
			\begin{proof}
			\step{<4>1}{
				\define{Player 1 strategy
					\myeq[*]{
						f
						\triangleq
						s_4 \rightarrow (s_4 \wedge s_1').
					}
				}
			}
			\step{<4>2}{
				If a play visits a node in $\{s_5, s_6\}$, then it later visits the set $\{s_0, \dots, s_4\}$.}
			\step{<4>3}{
				If player 1 uses strategy $f$ and a play visits the set $\{s_0, \dots, s_4\}$, then the play remains forever in it.}
				\begin{proof}
				\step{<5>1}{
					The only edge that exits $\{s_0, \dots, s_4\}$ is $s_4 \wedge s_5'$.}
				\step{<5>2}{
					The edge $s_4 \wedge s_5'$ is not in the strategy $f$ of \stepref{<4>1}.}
				\qedstep
					By \stepref{<5>1} and \stepref{<5>2}.
				\end{proof}
			\step{<4>4}{
				If player 1 uses strategy $f$, then no play visits $G_1$ an infinite number of times.}
				\begin{proof}
				\pf\ By \stepref{<4>2} and \stepref{<4>3}.
				\end{proof}
			\step{<4>5}{
				Any play that remains in $\{s_0, \dots, s_4\}$ visits $s_0$ and $s_1$, or $s_2$ and $s_3$, an infinite number of times.}
			\step{<4>6}{
				Any play that remains in $\{s_0, \dots, s_4\}$ satisfies $\always \eventually P$.}
				\begin{proof}
				\pf\ By \stepref{<4>5}, \stepref{<2>2}, and \stepref{<3>1}.
				\end{proof}
			\step{<4>7}{
				If player 1 uses strategy $f$, then all plays satisfy $\always \eventually P$.}
				\begin{proof}
				\pf\ By \stepref{<4>2}, \stepref{<4>3}, and \stepref{<4>6}.
				\end{proof}
			\qedstep
				\begin{proof}
				\pf By \stepref{<4>4} and \stepref{<4>7}, if player 1 uses strategy $f$, then all plays satisfy $\always \eventually P$ and violate $\always \eventually G_1$.
				By definition of $\varphi_0$, all plays violate $\varphi_0$.
				So, there does not exist a winning strategy for player 0.
				\end{proof}
			\end{proof}
		\step{<3>2}{
			\case{$\dbr{P} \cap \{s_0, s_1\} = \emptyset$.} }
			\begin{proof}
			\step{<4>1}{
				\case{$\dbr{P} \cap \{s_3, \dots, s_6\} = \emptyset$.} }
				\begin{proof}
				\step{<5>1}{
					\define{Player 0 strategy
						\myeq[*]{
							f
							\triangleq
							\big(s_1 \rightarrow (s_1 \wedge s_0')\big) \wedge
							\big(s_3 \rightarrow (s_3 \wedge s_4')\big).
						}
					}
				}
				\step{<5>2}{
					If player 0 uses strategy $f$, then no play visits node $s_2$ infinitely many times.}
					\begin{proof}
					\pf\ If a play starts at node $s_2$, then it leaves $s_2$.
					By \stepref{<5>1}, if a play is not at node $s_2$, then none of the edges incoming to $s_2$ is in the strategy \stepref{<4>1}.
					\end{proof}
				\step{<5>3}{
					$\dbr{P} = \{s_2\}$}
					\begin{proof}
					\pf\ By \stepref{<2>2}, \stepref{<3>2}, and \stepref{<4>1}.
					\end{proof}
				\step{<5>4}{
					If player 0 uses strategy $f$, then no play satisfies $\always \eventually P$.}
					\begin{proof}
					\pf\ By \stepref{<5>2} and \stepref{<5>3}.
					\end{proof}
				\qedstep
					\begin{proof}
					\pf\ By \stepref{<5>4}, if player 0 uses strategy $f$, then all plays violate $\always \eventually P$.
					By definition of $\varphi_1$, all plays violate $\varphi_1$.
					So, there does not exist a winning strategy for player 1.
					\end{proof}
				\end{proof}
			\step{<4>2}{
				\case{$\dbr{P} \cap \{s_3, \dots, s_6\} \neq \emptyset$.} }
				\begin{proof}
				\step{<5>1}{
					\case{Initial node in $\{s_0, s_1\}$.} }
					\begin{proof}
					\step{<6>1}{
						\define{Player 0 strategy
							\myeq[*]{
								f
								\triangleq
								s_1 \rightarrow (s_1 \wedge s_0').
							}
						}
					}
					\step{<6>2}{
						If player 0 uses strategy $f$, then all plays remain in the set $\{s_0, s_1\}$.}
						\begin{proof}
						\pf\ By \stepref{<5>1} and \stepref{<6>1}.
						\end{proof}
					\step{<6>3}{
						If player 0 uses strategy $f$, then all plays violate $\always \eventually P$.}
						\begin{proof}
						\pf\ By \stepref{<6>2} and \stepref{<3>2}.
						\end{proof}
					\qedstep
						\begin{proof}
						\pf\ By definition of $\varphi_1$, and \stepref{<6>3}, if player 0 uses strategy $f$, then all plays violate $\varphi_1$.
						So, there does not exist a winning strategy for player 1.
						\end{proof}
					\end{proof}
				\step{<5>2}{
					\case{Initial node not in $\{s_0, s_1\}$.} }
					\begin{proof}
					\step{<6>1}{
						\define{Player 1 strategy
							\myeq[*]{
								f
								\triangleq
								s_4 \rightarrow (s_4 \wedge s_5').
							}
						}
					}
					\step{<6>2}{
						If player 1 uses strategy $f$, then all plays visit infinitely many times either $s_2$ and $s_3$, or $s_3, s_4, s_5$ and $s_6$.}
						\begin{proof}
						\pf\ By \stepref{<1>0}, \stepref{<5>2}, \stepref{<6>1}, the play remains in the set $\{s_2, \dots, s_6\}$.
						The only cycles in $\{s_2, \dots, s_6\}$ are $s_2, s_3$ and $s_3, s_4, s_5, s_6$.
						By the pigeonhole principle, at least one of these two cycles must be visited an infinite number of times.
						\end{proof}
					\step{<6>3}{
						If player 1 uses strategy $f$, then all plays satisfy $\always \eventually P$.}
						\begin{proof}
						\pf\ By \stepref{<6>2}, \stepref{<2>2} and \stepref{<4>2}.
						\end{proof}
					\step{<6>4}{
						If player 1 uses strategy $f$, then no play visits $\dbr{G_2}$.}
						\begin{proof}
						\step{<7>1}{All plays start outside $\{s_0, s_1\}$.}
							\begin{proof}
							\pf\ By \stepref{<5>2}.
							\end{proof}
						\step{<7>2}{No play that is outside $\{s_0, s_1\}$, enters $\{s_0, s_1\}$.}
							\begin{proof}
							\pf\ By \stepref{<6>1}.
							\end{proof}
						\step{<7>3}{No play visits $\{s_0, s_1\}$.}
							\begin{proof}
							\pf\ By \stepref{<7>1} and \stepref{<7>2}.
							\end{proof}
						\qedstep
							\begin{proof}
							\pf\ By \stepref{<7>3} and the definition of $G_2$.
							\end{proof}
						\end{proof}
					\step{<6>5}{
						If player 1 uses strategy $f$, then no play satisfies $\always \eventually G_2$.}
						\begin{proof}
						\pf\ By \stepref{<6>4}.
						\end{proof}
					\qedstep
						\begin{proof}
						\pf\ By \stepref{<6>3} and \stepref{<6>5}, if player 1 uses strategy $f$, then all plays satisfy $\always \eventually P$ and violate $\always \eventually G_2$.
						By definition of $\varphi_0$, all plays violate $\varphi_0$.
						So, there does not exist a winning strategy for player 0.
						\end{proof}
					\end{proof}
				\qedstep
					\begin{proof}
					\pf\ By \stepref{<5>1} and \stepref{<5>2}, that cover all initial nodes in $V$.
					\end{proof}
				\end{proof}
			\qedstep
				\begin{proof}
				\pf\ By \stepref{<4>1} and \stepref{<4>2}.
				\end{proof}
			\end{proof}
		\qedstep
			\begin{proof}
			\pf\ By \stepref{<3>1} and \stepref{<3>2}.
			\end{proof}
		\end{proof}
	\qedstep
		\begin{proof}
		\pf\ By \stepref{<2>1} and \stepref{<2>2}.
		\end{proof}
	\end{proof}
\qedstep
	\begin{proof}
	\pf\ By \stepref{<1>1} and \stepref{<1>2}.
	\end{proof}
\end{proof}
\end{pf2proof}

\begin{lemma}[Nonexistence of GR(1) assumption]
\lemlab{lem:no-gr1-assumption}
Define the transition relations $\rho_0, \rho_1$ as in the game of \cref{fig:counterexample_for_safety_assumption}.
There does not exist a property $P$ in the GR(1) fragment, such that
\myeq{
	\varphi_1
	\triangleq
	(\always \rho_1)
		\strictarrow
	(\always \rho_0 \wedge P)
}
be realizable by player 1, and
\myeq{
	\varphi_0
	\triangleq
	(
		\always \rho_0 \wedge
		P
	)
		\strictarrow
	(
		\always \rho_1 \wedge
		\always \eventually G_1 \wedge
		\always \eventually G_2
	).
}
be realizable by player 0.
\end{lemma}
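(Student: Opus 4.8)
The plan is to normalize $P$ into a safety part and a liveness part, and then invoke the two impossibility results already proved: \cref{prop:counterexample-for-safety-assumption} (no proper safety restriction of $\rho_1$ suffices) and \cref{prop:nonexistence-recurrence} (no single recurrence over nodes suffices). First I would use that every property in the GR(1) fragment is, up to logical equivalence and modulo initial conditions (neglected when computing winning sets, cf.\ the remark after \cref{def:gr1-interleaving-repr}), a conjunction $P \equiv \always\rho \wedge \bigwedge_{r} \always\eventually R_r$ with $\rho$ an action and each $R_r \subseteq V$ a set of nodes: the $\always\eventually\neg P_k$ conjuncts admissible on the assumption side of \cref{def:strict-implication} are recurrence properties of $\neg P_k$ and fold into $\bigwedge_r \always\eventually R_r$, while restricting liveness to states excludes $\eventually\always$ conjuncts. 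Moreover, by \cref{prop:transition-relations-hold} any play violating $\always(\rho_0 \wedge \rho_1)$ already falsifies $\varphi_0$ or $\varphi_1$, so $\always\rho$ is relevant only through which of the legal edges (those satisfying $\rho_0 \wedge \rho_1$) it removes.

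Next I would dispose of the safety conjunct $\always\rho$ by cases on the removed edges. If $\rho$ removes a legal player-$1$ edge, put $\rho' \triangleq \rho \wedge \rho_1 \subsetneq \rho_1$; then \cref{prop:counterexample-for-safety-assumption} gives that $\always(\rho_0 \wedge \rho') \wedge \always\eventually G_1 \wedge \always\eventually G_2$ is cooperatively unsatisfiable, hence so is the stronger $(\always\rho_0 \wedge P) \wedge (\always\rho_1 \wedge \always\eventually G_1 \wedge \always\eventually G_2)$, and by the composition argument in the text between \cref{prop:counterexample-for-safety-assumption} and \cref{prop:nonexistence-recurrence} (if $A \wedge B$ is cooperatively unsatisfiable then one of ``$A$ realizable by player~1'' / ``$A \strictarrow B$ realizable by player~0'' fails, and this survives strengthening $A$), one of $\varphi_0, \varphi_1$ is not realizable. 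If $\rho$ removes only player-$0$ edges, I would argue on the graph of \cref{fig:counterexample_for_safety_assumption} that this either still breaks cooperative satisfiability of $\always\eventually G_1 \wedge \always\eventually G_2$ (same conclusion) or leaves player~1 unable to maintain the safety conjunct of its guarantee against an adversarial player~0, so $\varphi_1$ is unrealizable. In the remaining case $\rho$ removes no legal edge, $\always\rho$ is vacuous relative to $\always(\rho_0 \wedge \rho_1)$, so we may take $P \equiv \bigwedge_r \always\eventually R_r$.

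It then remains to rule out a conjunction of node-recurrences, and the plan is to re-run the case analysis of \cref{prop:nonexistence-recurrence} with its single target replaced by the family $\{R_r\}_r$. Every branch there is decided by which of the regions $\{s_0,s_1\}$, $\{s_2,s_3\}$, $\{s_3,\dots,s_6\}$ the recurrence targets meet, and the adversary strategies used confine the set of infinitely often visited nodes to a cycle missing $G_1$ or missing $G_2$; reading ``$\bigwedge_r \always\eventually R_r$ holds'' as ``that cycle meets every $R_r$'', a branch in which some player forces the play away from a region goes through once one exhibits a conjunct $R_r$ that the GR(1) assumption places in that region (so player~1's contract $\varphi_1$ is unrealizable), and the complementary branch goes through because then every $R_r$ lies inside the trap cycle (so player~0's goals are defeated and $\varphi_0$ is unrealizable). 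Since the branching in \cref{prop:nonexistence-recurrence} already exhausts the placements of recurrence targets in $\{s_0,\dots,s_6\}$, the same split covers all conjunctions.

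The main obstacle is this last extension: \cref{prop:nonexistence-recurrence} is stated for a single recurrence set, whereas $\bigwedge_r \always\eventually R_r$ is strictly stronger, so for each branch of the (already intricate) case split one must pair it with the specific conjunct that defeats it, and in the complementary branches verify that no admissible placement of the conjuncts lets player~0 exploit the resulting assumption — together with the analogous bookkeeping in the player-$0$-edge subcase of the safety part. The player-$1$-edge subcase, by contrast, is essentially immediate from \cref{prop:counterexample-for-safety-assumption} and the cooperative-unsatisfiability argument already spelled out in the text.
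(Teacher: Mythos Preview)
Your plan is aligned with the paper's intent but far more detailed than what the paper actually writes. The paper's entire proof of \cref{lem:no-gr1-assumption} is a single line: ``By \cref{prop:counterexample-for-safety-assumption} and \cref{prop:nonexistence-recurrence}.'' It does not spell out the decomposition of $P$ into a safety part and a liveness part, nor the case split you outline, nor any treatment of multiple recurrence conjuncts.

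You have correctly noticed that this one-line proof is, strictly speaking, incomplete: \cref{prop:nonexistence-recurrence} handles only a \emph{single} recurrence set $\always\eventually P$, while a general GR(1) assumption can be a conjunction $\always\rho \wedge \bigwedge_r \always\eventually R_r$. The paper's surrounding text (the paragraph between \cref{prop:counterexample-for-safety-assumption} and \cref{prop:nonexistence-recurrence}) supplies the cooperative-unsatisfiability argument that lets you dispose of any nontrivial safety conjunct, and you invoke it correctly. But the extension from one recurrence target to a conjunction is genuinely not proved anywhere in the paper; it is left implicit. Your plan to re-run the case analysis of \cref{prop:nonexistence-recurrence} with the family $\{R_r\}_r$ in place of a single $P$ is the right way to close that gap, and your observation that each branch of that analysis is driven by which regions the recurrence targets meet (so one can pick a witnessing conjunct per branch) is the correct mechanism.

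In short: your approach and the paper's approach coincide at the level of ``combine the two propositions,'' but you are doing the honest work of checking that the combination actually covers all of GR(1), whereas the paper asserts it without further argument. What you flag as ``the main obstacle'' is real, and the paper simply does not address it.
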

\begin{pf2proof}
\begin{proof}
\pf\ By \cref{prop:counterexample-for-safety-assumption} and \cref{prop:nonexistence-recurrence}.
\end{proof}
\end{pf2proof}

This can be avoided, by introducing a goal counter $\mathit{goal}$ as auxiliary variable, and switch between safety assumptions, depending on the counter, e.g., $\always ( (s_4 \wedge \mathit{goal} = 1) \rightarrow (s_4 \wedge s_5'))$.

Here, we decide to not introduce explicitly new variables in the contract, neither safety assumptions that fix choices of edges.
Instead, in \cref{sec:nested-games}, we will define nested games, where the safety assumptions are introduced by partitioning the game graph into sub-games, and avoid explicit reference to extra variables inside the formula.
The purpose served by those extra variables is achieved by structuring the contract into multiple games.

\section{Nested games}
\label{sec:nested-games}

A structured way of isolating conditional assumptions is by partitioning the game into smaller ones.
Each smaller game has its own assumptions, independently of the other games.
This prevents circularity of liveness dependencies.
Each game has one reachability objective: to reach the game that contains it.
Only unconditional liveness assumptions can appear inside each game.
Assumptions that themselves depend on other liveness assumptions become objectives in their own game.
The games partition the game graph.
The approach of nested games is reminiscent of McNaughton's recursive algorithm for solving parity games \cite{McNaughton93apal}.

\begin{figure}
\centering
\includesvg[width=0.6\textwidth]{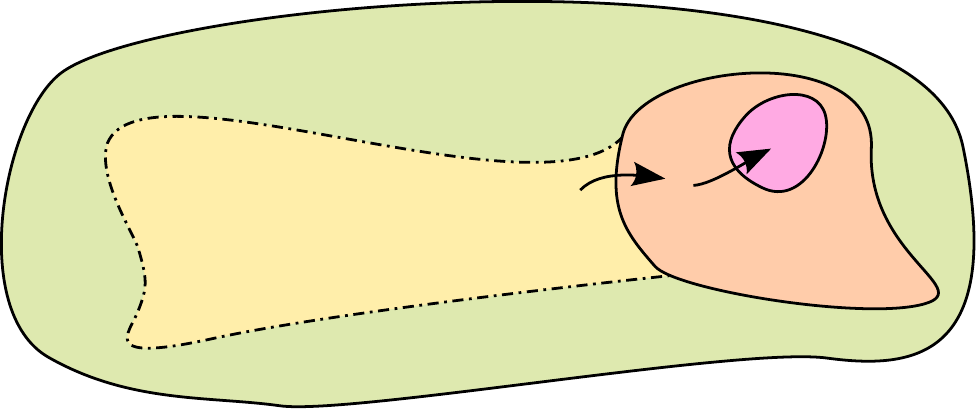}
\caption{The sets (labeled by predicates) computed by \textsc{UnconditionalAssumption} in \cref{algo:nested-game-construction}.}
\label{fig:attr1_trap_attr0}
\end{figure}

\begin{algorithm}[H]
\caption{Construction of nested-game GR(1) specification, for each recurrence goal $G$}
\label{algo:nested-game-construction}
\begin{algorithmic}[1]
\Procedure{GameStack}{$j,\; G,\; \mathit{uncovered},\; \mathit{stack}$}
\State $\mathit{trap} \gets \top$
\State $\mathit{goal} \gets G$
\State $\mathit{stack} \gets \mathrm{set}()$
\While{$\dbr{\mathit{trap} } \neq \emptyset$} \Comment{Create unconditional assumptions, until stuck}
	\State $\mathit{attr}, \mathit{trap} \gets$
		\Call{UnconditionalAssumption}{$j,\; \mathit{goal}$}
	\State $\mathit{goal} \gets \mathit{attr} \cup \mathit{trap}$
	\State $\mathit{assumptions}.\mathrm{add}\big(
		\always \eventually (
			\mathit{trap}
				\rightarrow
			\mathit{attr}
		) \big)$
\EndWhile
\State $\mathit{game} \gets (j,\; \mathit{goal} \wedge \neg G,\; G,\; \mathit{assumptions})$
\State $\mathit{stack}.\mathrm{append}(\mathit{game})$
\State $\mathit{uncovered} \gets \mathit{uncovered} \wedge \neg \mathit{goal}$
\If{$\dbr{\mathit{uncovered} } = \emptyset$}
	\Comment{Covered cooperatively winning set?}
	\State \Return
\EndIf
\State \Call{GameStack}{$1 - j,\; \mathit{goal},\; \mathit{uncovered},\; \mathit{stack}$}
	\Comment{Construct a nested game}
\State \Return
\EndProcedure
\Procedure{UnconditionalAssumption}{$j,\; g$}
\State $A \gets \Attr_j(g)$
\State $B \gets \Attr_{1 - j}(A)$
\State $r \gets \neg A \wedge B \wedge \Trap_j(B, A)$
\State \Return $A,\; r$
\EndProcedure
\end{algorithmic}
\end{algorithm}

\cref{algo:nested-game-construction} computes a stack of nested games, for reaching a goal $G$.
It covers the cooperative winning set $C$, so a later visit to $G$ is always possible, from any node in $C$.
Part of the computation is illustrated in \cref{fig:attr1_trap_attr0}.

\begin{proposition}[\textsc{GameStack} variant]
\prolab{prop:game-stack-variant}
If procedure \textsc{GameStack} calls \textsc{GameStack} (L14), then the set $\dbr{\mathit{uncovered} }$ after L11 in the caller has at least one more node than $\dbr{\mathit{uncovered} }$ after L11 in the callee.
\end{proposition}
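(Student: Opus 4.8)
\emph{Proof proposal.}
Write $\mathit{goal}_c,\mathit{goal}_d$ for the value of the local variable $\mathit{goal}$ after the \textbf{while} loop (lines 4--8) of the calling and the called invocation of \textsc{GameStack}, and $U_c,U_d$ for the value of $\mathit{uncovered}$ after line 11 in caller and callee. Since the callee is the invocation at line 14, it receives $U_c$ as its $\mathit{uncovered}$ parameter, so $\dbr{U_d}=\dbr{U_c}\cap\dbr{\neg\mathit{goal}_d}$; and since the callee initializes $\mathit{goal}$ to $\mathit{goal}_c$ at line 2 and thereafter only enlarges it (each line 6 sets $\mathit{goal}\gets\mathit{attr}\cup\mathit{trap}$ with $\mathit{attr}=\Attr_{1-j}(\mathit{goal})\supseteq\mathit{goal}$), we get $\dbr{\mathit{goal}_d}\supseteq\dbr{\mathit{goal}_c}$ and $\dbr{U_d}\subseteq\dbr{U_c}$. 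The plan is therefore to exhibit one node $w\in\dbr{U_c}\cap\dbr{\mathit{goal}_d}$; finiteness of the node set then gives $\abs{\dbr{U_c}}\ge\abs{\dbr{U_d}}+1$.

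I would first gather three facts. \emph{(i)} Line 14 was reached, so the test of line 12 failed in the caller, hence $\dbr{U_c}\ne\emptyset$; fix $u\in\dbr{U_c}$. \emph{(ii)} Because the $\mathit{goal}$ of each recursion level contains the previous level's $\mathit{goal}$ (same monotonicity), an induction starting from the top-level call, where $\mathit{uncovered}=C$, shows that at every level $\dbr{\mathit{uncovered}}$ after line 11 equals $C\setminus\dbr{\mathit{goal}}$ of that level; thus $\dbr{U_c}=C\setminus\dbr{\mathit{goal}_c}$, so $u\in C$ and $u\notin\dbr{\mathit{goal}_c}$. \emph{(iii)} The caller's \textbf{while} loop terminated, so in its final iteration line 5 returned a $\mathit{trap}$ with $\dbr{\mathit{trap}}=\emptyset$ and line 6 set $\mathit{goal}_c=\Attr_j(\mathit{goal}')\cup\emptyset=\Attr_j(\mathit{goal}')$, where $\mathit{goal}'$ was the value on entry to that iteration; by idempotence of the attractor this gives $\Attr_j(\mathit{goal}_c)=\mathit{goal}_c$, i.e.\ $\CPre_j(\mathit{goal}_c)\subseteq\mathit{goal}_c$.

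The core is the parity-game-style argument used earlier in the report to show $\Attr_1(A_0)\ne\emptyset$. By the fixpoint characterization of the cooperative winning set in \cref{eq:cooperative-winning-set-fixpoint}, from $u\in C$ the players can cooperate to reach the goal $G\subseteq\mathit{goal}_c$ while remaining inside $C$ (the transition relations were restricted to $C$ in \cref{sec:property-closure}). Take such a finite path $u=v_0,v_1,\dots,v_t$ with $t$ least such that $v_t\in\dbr{\mathit{goal}_c}$, which exists since $u\notin\dbr{\mathit{goal}_c}$; set $w:=v_{t-1}$. Then all of $v_0,\dots,v_{t-1}$ lie in $C\setminus\dbr{\mathit{goal}_c}=\dbr{U_c}$, so $w\in\dbr{U_c}$, and the edge $w\to v_t$ shows $w$ has a successor in $\dbr{\mathit{goal}_c}$. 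The player moving at $w$ cannot be $j$: at a $j$-node a successor in $\dbr{\mathit{goal}_c}$ forces $w\in\CPre_j(\mathit{goal}_c)\subseteq\Attr_j(\mathit{goal}_c)=\mathit{goal}_c$ by fact \emph{(iii)}, contradicting $w\notin\dbr{\mathit{goal}_c}$. Hence player $1-j$ moves at $w$, so $w\in\CPre_{1-j}(\mathit{goal}_c)\subseteq\Attr_{1-j}(\mathit{goal}_c)$.

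Finally I would note that $\Attr_{1-j}(\mathit{goal}_c)$ is exactly the set $A$ returned by the \emph{first} call to \textsc{UnconditionalAssumption} made inside the callee: on entry to the callee's loop its player index is $1-j$ and its $\mathit{goal}$ is still $\mathit{goal}_c$, so the call is \textsc{UnconditionalAssumption}$(1-j,\mathit{goal}_c)$, returning $A=\Attr_{1-j}(\mathit{goal}_c)$, after which line 6 sets $\mathit{goal}\gets A\cup r\supseteq A$. Therefore $w\in A\subseteq\dbr{\mathit{goal}_d}$, so $w\in\dbr{U_c}\setminus\dbr{U_d}$ and the claim follows. The step I expect to cost the most care is fact \emph{(iii)} together with confining the cooperative path from $u$ to $C$; both rest on the restriction of the game to $C$ from \cref{sec:property-closure} and on idempotence of $\Attr$, so I would isolate them as small lemmas before running the main argument.
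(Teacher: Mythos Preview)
Your argument is correct and follows essentially the same route as the paper's proof: both establish that the caller's terminal $\mathit{goal}_c$ is a $j$-attractor (your fact~(iii)), locate a node $w\in\dbr{U_c}$ with an edge into $\dbr{\mathit{goal}_c}$, argue by the attractor property that $w$ must be a $(1-j)$-node, and conclude $w\in\Attr_{1-j}(\mathit{goal}_c)$, which is absorbed in the callee's first iteration. The paper reaches $w$ by a direct contradiction (``suppose no $(1-j)$-node in $\mathit{uncovered}\setminus A_1$ has an edge to $A_1$'') where you take a shortest cooperative path, but this is the same determinacy observation. One cosmetic point: your line references are off by one throughout (initialisation of $\mathit{goal}$ is L3, the call to \textsc{UnconditionalAssumption} is L6, the update of $\mathit{goal}$ is L7); and your fact~(ii) invokes an invariant $\dbr{U}=C\setminus\dbr{\mathit{goal}}$ across recursion levels that is true but slightly stronger than what the paper uses---it only needs $\dbr{\mathit{uncovered}}\subseteq C$ to guarantee the cooperative path exists.
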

\begin{proof}
Consider a call to \textsc{GameStack} by \textsc{GameStack} (L14).
Variables in the caller, and in its last call to \textsc{UnconditionalAssumption} (L6) will be indexed by 1.
Variables in the callee, and in its first call to \textsc{UnconditionalAssumption} (L6) will be indexed by 2.

We will prove that, in the first call of the callee to \textsc{UnconditionalAssumption} (L6), the attractor $A_2 = \Attr_{j_2}(g_2)$ will be strictly larger than $g_2$ (L17).
We need to prove that there is a node outside $g_2$, from where player $j_2$ can force a visit to $g_2$.
It is $g_2 = \mathit{goal}_2$ (L16,6) in the first iteration of the loop (L5).
First iteration implies $\mathit{goal}_2 = G_2$ (L3).
In the caller, $\mathit{goal}_1 = G_2$ (L14,1), so $g_2 = \mathit{goal}_2 = G_2 = \mathit{goal}_1$.
The value of $j_2$ (L17) is $1 - j_1$ in the caller (L16,6,1,14).

In the caller, L14 was reached.
So the loop terminated, implying $\dbr{\mathit{trap}_1} = \emptyset$ (L5).
In the last loop iteration, $\dbr{\mathit{trap}_1} = \emptyset$ implies that $\dbr{\mathit{goal}_1} = \dbr{\mathit{attr}_1}$ (L7).
The return statement (L12) was not executed, so $\dbr{\mathit{uncovered}_{1, L12} } \neq \emptyset$.
By L11, $\dbr{\mathit{uncovered}_{1, L12} } \neq \emptyset$ implies that $\mathit{goal}_1$ does not cover $\mathit{uncovered}_{1, L1}$ at L1.
By $\dbr{\mathit{goal}_1} = \dbr{\mathit{attr}_1}$, it follows that $\dbr{\mathit{attr}_1}$ does not cover $\mathit{uncovered}_{1, L1}$.

It is $A_1 = \mathit{attr}_1$ (L6,20) from the last call to \textsc{UnconditionalAssumption}.
So $A_1 = \Attr_{j_1}(g_1)$ does not cover $\mathit{uncovered}_{1, L1}$.
By definition, $\mathit{uncovered}_{1, L1}$ is a subset of the cooperative winning set, and goal $g_1$ is contained in $A_1$.
So, any node in $\mathit{uncovered}_{1, L1}$ can reach $g_1$, thus also $A_1$.

Suppose that no node of player $(1 - j_1)$ in $\mathit{uncovered}_{1, L12} = \mathit{uncovered}_{1, L1} \wedge \neg \mathit{goal}_1 = \mathit{uncovered}_{1, L1} \wedge \neg A_1$ has an edge that leads to $A_1$.
Then, $\mathit{uncovered}_{1, L1} \wedge \neg A_1$ (non-empty) must contain a node of player $j_1$ that has an edge to $A_1$.
This node must\footnote{
	In a turn-based game, from each node, a single player controls all edges.
	This argument would not hold in a concurrent game, a consequence of lacking determinacy \cite{Alur02jacm,Alfaro01}.}
be in $A_1$, because $A_1$ is an attractor for player $j_1$.
This is a contradiction.
We conclude that at least one node of player $j_2 = 1 - j_1$ is in $\mathit{uncovered}_{1, L1} \wedge \neg A_1$ and has an edge to $A_1$.
This node is outside $A_1 = \mathit{attr}_1 = \mathit{goal}_1 = \mathit{G}_2$, and will be in $A_2 = \Attr_{j_2}(g_2) = \Attr_{1 - j_1}(A_1)$ in the first call to \textsc{UnconditionalAssumption} by the callee.
This proves the claim.
\end{proof}

\begin{proposition}[\textsc{GameStack} Termination]
\prolab{prop:game-stack-invariant}
If the game graph is finite,
then any call to procedure \textsc{GameStack} of \cref{algo:nested-game-construction} terminates.
\end{proposition}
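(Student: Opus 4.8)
The plan is to establish termination by exhibiting two well-founded measures, both valued among the finitely many node sets of the (finite) game graph: one bounding the number of iterations of the inner \textbf{while} loop of \textsc{GameStack}, and one bounding the depth of the recursion. Since a single invocation of \textsc{GameStack} consists of straight-line steps, one \textbf{while} loop, and at most one tail recursive call, termination of one invocation reduces to termination of the loop, and overall termination then follows from a bound on the recursion depth.

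First I would show the \textbf{while} loop (L5--L9) executes finitely often by tracking the node set $\dbr{\mathit{goal}}$. Each iteration calls \textsc{UnconditionalAssumption}$(j,\mathit{goal})$, which itself terminates because it only evaluates the least fixpoint $\Attr_j$ and the greatest fixpoint $\Trap_j$ over the finite complete lattice of node sets; it returns $\mathit{attr} = \Attr_j(\mathit{goal}) \supseteq \mathit{goal}$ and a set $\mathit{trap}$ that is disjoint from $\mathit{attr}$ by construction (it is defined as $\neg\mathit{attr} \wedge B \wedge \Trap_j(B,\mathit{attr})$ with $B = \Attr_{1-j}(\mathit{attr})$). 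Line~7 then sets $\mathit{goal} \gets \mathit{attr} \cup \mathit{trap} \supseteq \mathit{goal}$, so $\dbr{\mathit{goal}}$ is non-decreasing. If the loop re-enters its body, the loop guard gives $\dbr{\mathit{trap}} \neq \emptyset$; since $\dbr{\mathit{trap}}$ is disjoint from $\dbr{\mathit{attr}}$ while the previous value of $\dbr{\mathit{goal}}$ lies inside $\dbr{\mathit{attr}}$, the set $\dbr{\mathit{trap}}$ contributes at least one node not previously in $\dbr{\mathit{goal}}$, so the new $\dbr{\mathit{goal}}$ strictly contains the old one. Hence $\abs{\dbr{\mathit{goal}}}$ strictly increases past the first iteration, and the loop halts after at most $\abs{S}$ iterations.

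Finally I would bound the recursion depth using \cref{prop:game-stack-variant}. Whenever \textsc{GameStack} reaches its recursive call, the preceding \textbf{If} test guarantees that the updated $\dbr{\mathit{uncovered}}$ passed to the callee is non-empty, and \cref{prop:game-stack-variant} asserts that this callee's $\dbr{\mathit{uncovered}}$ has strictly fewer nodes than the caller's. Hence, along any chain of nested recursive invocations, $\abs{\dbr{\mathit{uncovered}}}$ is a strictly decreasing sequence of positive integers bounded above by $\abs{S}$, so the chain has length at most $\abs{S}$; together with the fact that each invocation makes at most one recursive call, the recursion tree is a single path of depth at most $\abs{S}$. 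Combining this with the termination of the \textbf{while} loop, every invocation does only finitely much work and the recursion bottoms out, so \textsc{GameStack} terminates. The step I expect to demand the most care is the strict-growth argument for $\mathit{goal}$ inside the loop — specifically pinning down that $\mathit{trap}$ really is disjoint from $\mathit{attr}$ and contributes a fresh node exactly when the loop continues, so that the monotone growth of $\dbr{\mathit{goal}}$ is genuinely strict; the recursion-depth bound then follows immediately from the already-established \cref{prop:game-stack-variant}.
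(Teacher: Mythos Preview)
Your proposal is correct and follows essentially the same approach as the paper. Both arguments show (i) the \textbf{while} loop terminates because $\dbr{\mathit{goal}}$ strictly grows whenever the guard $\dbr{\mathit{trap}}\neq\emptyset$ holds (using that $\mathit{trap}$ is disjoint from $\mathit{attr}\supseteq$ the old $\mathit{goal}$), and (ii) the recursion terminates by invoking \cref{prop:game-stack-variant} as a strictly decreasing measure on $\abs{\dbr{\mathit{uncovered}}}$; the only stylistic difference is that the paper phrases both parts as proofs by contradiction, whereas you give a direct well-founded-measure argument and also make explicit that \textsc{UnconditionalAssumption} itself terminates.
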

\begin{proof}
A call to \textsc{GameStack} may not terminate for two reasons: the loop or the recursion never terminate.
Suppose that the loop never terminates, so $\dbr{\mathit{trap} } \neq \emptyset$.
It is $\mathit{goal} = g$ (L6,16) and $\dbr{g} \subseteq \dbr{\Attr_j(g) }$ (attractor def) and $A = \Attr_j(g)$ (L17), so
$
\dbr{\mathit{goal} }
\subseteq
\dbr{A}
$.

The set $\dbr{\mathit{trap} } = \dbr{r}$ (L6,20) and $\dbr{r} \cap \dbr{A} = \emptyset$ (L19), so $\dbr{\mathit{trap} } \cap \dbr{A} \neq \emptyset$.
We supposed that $\dbr{\mathit{trap} } \neq \emptyset$, so the set $\dbr{\mathit{trap} }$ contains nodes outside $\dbr{A}$.
By $\dbr{\mathit{goal} } \subseteq \dbr{A}$, it follows that $\dbr{\mathit{trap} }$ contains nodes outside $\dbr{\mathit{goal} }$.

So the set $\dbr{\mathit{goal} }$ increases strictly in each iteration.
By hypothesis, the game graph has a finite number of nodes, so $\mathit{goal}$ will eventually cover the graph, implying that $\dbr{B} = \dbr{A}$ (L18), thus $\dbr{\mathit{trap} } = \dbr{r} \subseteq \dbr{\neg A \wedge B} = \dbr{\neg A \wedge A} = \emptyset$.
This contradicts the supposition $\dbr{\mathit{trap} } = \emptyset$.
So the loop at L5 terminates.

Suppose that the number of recursive calls to \textsc{GameStack} is infinite.
By \cref{prop:game-stack-variant}, with each recursive call to \textsc{GameStack}, the cardinality of the set $\dbr{\mathit{uncovered} }$ decreases by at least one.
We supposed an infinite number of recursive calls, so in some recursive call to \textsc{GameStack}, $\dbr{\mathit{uncovered} } = \emptyset$.
So the guard of L12 becomes true, and that call returns, without any further recursion, a contradiction.
Therefore, the number of recursive calls is finite.
\end{proof}

Upon termination, the algorithm has computed a stack of games, each game is in effect in a subset of the game graph.

The time complexity is at most quadratic in the number of nodes, with time measured by $\CPre_j$ calls.
This complexity follows because of single alternation of least and greatest fixpoints (L17--19).
For each call to \textsc{UnconditionalAssumption} either $\dbr{\textit{trap} } = \emptyset$, so by \cref{prop:game-stack-variant} the next call to \textsc{UnconditionalAssumption} will remove a node from the uncovered ones, or $\dbr{\textit{trap} } \neq \emptyset$ so by \cref{prop:game-stack-invariant}, the current call removes a node from the uncovered ones.
Therefore, \textsc{UnconditionalAssumption} is called at most $2 \abs{\Sigma}$ times.

Each call to \textsc{UnconditionalAssumption} contains two chained attractor computations, and a trap computation.
Each of these can invoke $\CPre_j$ at most $\abs{\Sigma}$ times.
The previous two statements imply that the time complexity is at most quadratic in the number of game graph nodes.

Note that searching for fewer assumptions, inducing a smaller winning set, can be exponentially expensive,
as proved for syntactic recurrence formulae in \cite{Alur15tacas}.
Conceptually, the nesting of games has common elements with modular game graphs \cite{Alur06tcs} and open temporal logic \cite{Banerjee05todaes}.

Let us revisit the example of \cref{fig:counterexample_for_weak_fairness}, to observe the algorithm's execution.
Player 0 wants $\always \eventually G$.
The first call to \textsc{GameStack} will call \textsc{UnconditionalAssumption}.
Player 0 can force a visit to $s_6$ from the attractor $A = \Attr_0(s_6) = s_5 \vee s_6$.
Player 1 can force $A$ from $B = \Attr_1(A) = s_4 \vee s_5 \vee s_6$.
But $r = \bot$, because player 1 can escape to $s_1$.

So, a nested game is constructed over $s_0 \vee s_1 \vee s_2 \vee s_3 \vee s_4$, with player 1 wanting $\eventually (s_5 \vee s_6)$.
In the nested game, $A = \Attr_1(s_5 \vee s_6) = s_4 \vee s_5 \vee s_6$.
The attractor $B = \Attr_0(s_4 \vee s_5 \vee s_6) = \top$, and player 0 can keep player 1 in there, until player 0 visits $s_4 \vee s_5 \vee s_6$.
So, in the nested game, player 1 makes the assumption $\always \eventually \big(
	(s_0 \vee s_1 \vee s_2 \vee s_3) \rightarrow (s_4 \vee s_5 \vee s_6) \big)
	=
	\always \eventually \neg (s_0 \vee s_1 \vee s_2 \vee s_3)$.
This covers the cooperative winning set, in this example the entire game graph.

In implementation, the players need to communicate, and select a leader in a cyclic order.
Each player becomes a leader in turn.
Each time a player becomes a leader, it selects its next recurrence goal, in cyclic order.
It announces the current goal, by using an auxiliary integer variable dedicated to this purpose.
Note that this operation is analogous to centralized transducer construction \cite{Bloem12jcss}.
The goal corresponds to a game stack, as constructed above.
Therefore, all players switch to playing the game that corresponds to the current node (i.e., current state).
By construction of the stack, the play will be led to the selected goal.
When the goal is reached, the leader selects the next leader, and the sequence repeats.

\paragraph{Acknowledgments}

This work was supported in part by the TerraSwarm Research Center, one of six centers supported by the STARnet phase of the Focus Center Research Program (FCRP) a Semiconductor Research Corporation program sponsored by MARCO and DARPA.

	\appendix
	\bibliographystyle{IEEEtran}
\bibliography{assumption}

\end{document}